\newcommand{\cost}{\textnormal{cost}}
\newcommand{\A}{\mathcal{A}}
\newcommand{\N}{\mathcal{N}}
\newcommand{\E}{\mathbb{E}}
\newcommand{\Var}[1]{\textnormal{Var}(#1)}
\newcommand{\OPT}{\mathsf{OPT}}
\title{Adaptivity Gaps for the
Stochastic Boolean Function Evaluation Problem} 
\author{Lisa Hellerstein \inst{1} \and
Devorah Kletenik \inst{2} \and
Naifeng Liu \inst{3} \and
R. Teal Witter \inst{1}}
\institute{NYU Tandon, Brooklyn NY 11201, USA \and
Brooklyn College, Brooklyn NY 11210, USA \and
CUNY Graduate Center, New York NY 10016, USA}
\begin{document}

\maketitle

\begin{abstract}
    We consider the Stochastic Boolean Function Evaluation (SBFE) problem where the task is to efficiently evaluate a known Boolean function $f$ on an unknown bit string $x$ of length $n$. We determine $f(x)$ by sequentially testing the variables of $x$, each of which is associated with a cost of testing and an independent probability of being true. If a strategy for solving the problem  is adaptive in the sense that its next test can depend on the outcomes of previous tests, it has lower expected cost but may take up to exponential space to store. In contrast, a non-adaptive strategy may have higher expected cost but can be stored in linear space and benefit from parallel resources. The adaptivity gap, the ratio between the expected cost of the optimal non-adaptive and adaptive strategies, is a measure of the benefit of adaptivity. We present lower bounds on the adaptivity gap for the SBFE problem for popular classes of Boolean functions, including read-once DNF formulas, read-once formulas, and general DNFs. Our bounds range from $\Omega(\log n)$ to $\Omega(n/\log n)$, contrasting with recent $O(1)$ gaps shown for symmetric functions and linear threshold functions.
\end{abstract}

\section{Introduction}

We consider the question of determining adaptivity gaps for the Stochastic Boolean Function Evaluation (SBFE) problem, for different classes of Boolean formulas.
In an SBFE problem,
we are given a (representation of a) Boolean function 
$f:\{0,1\}^n \rightarrow \{0,1\}$,
a positive cost vector $c=[c_1, \ldots, c_n]$,
and a probability vector $p=[p_1, \ldots, p_n]$.
The problem is to determine the value $f(x)$
on an initially unknown random input $x \in \{0,1\}^n$.
The value of each $x_i$ can only be determined by performing a test, which incurs a cost of $c_i$.  Each $x_i$ is equal to 1 (is true) with independent probability $p_i$.
Tests are performed sequentially and
continue until $f(x)$ can be determined.
We say $f(x)$ is determined by a set of tests
if $f(x) = f(x')$ for all $x' \in \{0,1\}^n$ 
such that $x'_i=x_i$ for every $i$ in the set of tests. 

For example, if $f(x) = x_1 \vee \ldots \vee x_n$  then testing continues until a test is performed on some $x_i$ such that $x_i=1$ at which point we know $f(x)=1$,
or until all $n$ tests have been performed 
with outcome $x_i=0$ for each $x_i$ so we know $f(x)=0$.
The problem is to determine the order to perform tests that minimizes
the total expected cost of the tests. 

We will call a testing order a strategy which we can think of
as a decision tree for evaluating $f$.
A strategy can be {\em adaptive},
meaning that the choice of the next test $x_i$ can depend 
on the outcome of previous tests. 
In some practical settings, however, it is desirable 
to consider only {\em non-adaptive} strategies.
Non-adaptive strategies often take up less space than adaptive strategies, and they may be able to be evaluated more quickly if tests can be performed in parallel~\cite{DBLP:conf/soda/GuptaNS17}, such as in the problem of detecting network faults~\cite{harvey2007non} or in group testing for viruses, such as the coronavirus~\cite{liva2021optimum}.
A non-adaptive testing strategy is a permutation of the tests
where
testing continues in the order specified by the permutation 
until the value of $f(x)$ can be determined from the outcomes of the tests performed so far.
A non-adaptive strategy also corresponds to a decision tree
where all non-leaf nodes on the same level contain the same test $x_i$.

The {\em adaptivity gap} measures how much benefit 
can be obtained by using an adaptive strategy.
Consider a class $F$ of $n$-variable functions $f:\{0,1\}^n \rightarrow \{0,1\}$.
Let $\OPT_\N(f,c,p)$ be the expected evaluation cost of the optimal
non-adaptive strategy on function $f$ under costs $c$ and probabilities $p$.
Similarly, $\OPT_\A(f,c,p)$ is the expected evaluation cost of
the optimal adaptive strategy on $f$ under $c$ and $p$.
The adaptivity gap of the function class $F$ is
\begin{align*}
    \max_{f \in F}
    \sup_{c,p}
    \frac{\OPT_\N(f,c,p)}{\OPT_\A(f,c,p)}.
\end{align*}

The SBFE problem for a class of Boolean 
formulas $F$ restricts the evaluated $f$ to be a member of $F$.
In this paper we prove bounds on the adaptivity gaps for the SBFE problem on 
read-once DNF formulas,
DNF formulas, and read-once formulas.
(See Section~\ref{preliminaries} for definitions.)
A summary of our results can be found in Table~\ref{table:results}.


All the 
bounds in the table have a dependence on $n$, meaning that none of the listed SBFE problems has a constant adaptivity gap.  This contrasts with recent work of Ghuge et al.~\cite{Ghuge21}, which shows that the adaptivity gaps for the SBFE problem for symmetric Boolean functions and linear threshold functions are $O(1)$. 

For any SBFE problem, the non-adaptive strategy of testing the $x_i$ in increasing order of $c_i$ has an expected cost that is within a factor of $n$ of the optimal adaptive strategy~\cite{DBLP:conf/stoc/KaplanKM05}.  Thus $n$ is an upper bound on the adaptivity gap for all SBFE problems.

\begin{table}[ht]
\centering
\caption{A summary of our results. We also prove an $O(\sqrt{n})$
upper bound for tribes formulas i.e., read-once DNFs with unit costs
where every term has the same number of variables.
We say all probabilities are equal if $p_1 = p_2 = \ldots = p_n$.}
\label{table:results}
{\renewcommand{\arraystretch}{1.2}
\begin{tabular}{|l|l|}
 \hline
 \textbf{Formula Class} &  \textbf{Adaptivity Gap} \\
\hline
\multirow{3}{*}{Read-once DNF}
&  $\Theta(\log n)$ for unit costs, uniform distribution\\
\cline{2-2}
 &  $\Omega(\sqrt{n})$ for unit costs\\
\cline{2-2}
& $\Omega(n^{1-\epsilon}/\log n)$ for uniform distribution\\
\hline
Read-once & $\Omega(\epsilon^3 n^{1-2\epsilon/\ln(2)})$ for unit costs, equal probabilities\\
\hline
\multirow{2}{*}{DNF}
& $\Omega(n/\log n)$ for unit costs, uniform distribution\\
\cline{2-2}
& $\Theta(n)$ for uniform distribution \\
\hline
\end{tabular}
}
\end{table}


\textbf{Outline:} We present our results on formula classes
in increasing order of generality.
In Section \ref{sec:read-once}, we warm up with a variety of
results on read-once DNF formulas in different settings.
In Section \ref{sec:formula},
we prove our main technical result for read-once formulas,
drawing on branching process identities and concentration 
inequalities.
In Section \ref{sec:dnf}, we prove our most general results
on DNF formulas (the bounds also apply to the restricted
class of DNF formulas with a linear number of terms).
Note that we state our lower bound results in the most 
restricted context because they of course apply to more
general settings.
Due to space constraints, we defer some proofs to 
Appendix $\ref{app:delayed}$.

\subsection{Connection to $st$-connectivity in uncertain networks}

Our result for read-once formulas has implications for a problem
of determining $st$-connectivity in an uncertain network,
studied by Fu et al.~\cite{DBLP:journals/ton/FuFXPWL17}.
The input is a multi-graph with a 
source node $s$ and a destination node $t$.
Each edge corresponds to a variable $x_i$ indicating whether it is usable,
which is true with probability $p_i$.
Testing the usability of edge $i$ costs $c_i$.
The $st$-connectivity function for the multi-graph is true if and only if
there is a path of usable edges from $s$ to $t$.
The problem is to find a strategy to evaluate
the $st$-connectivity function that has minimum expected cost.



The $st$-connectivity function associated with a multi-graph can be represented 
by a read-once formula if and only if the multi-graph is a two-terminal series-parallel graph.
This type of graph has two distinguished nodes, $s$ and $t$, and is formed by recursively combining disjoint series-parallel graphs either in series, or in parallel (see \cite{eppstein1992parallel} for the precise definitions).\footnote{The term {\it series-parallel circuits} (systems) refers to a set of parallel circuits that are connected in series (see, e.g.,~\cite{el1986optimal,wiki:Series_and_parallel_circuits}).
Viewed as graphs,
they correspond to the subset of two-terminal series-parallel graphs whose $st$-connectivity functions correspond to read-once CNF formulas.  We note that Kowshik used the term ``series-parallel graph'' in a non-standard way to refer only to this subset; Fu et al.\, in citing Kowshik, used the term the same way~\cite{kowshik2011information,DBLP:journals/ton/FuFXPWL17}.}
Fu et al.\ performed experiments with both adaptive and non-adaptive strategies for this problem, comparing their performance, but did not prove theoretical adaptivity gap bounds.
Since the $st$-connectivity function on a series-parallel graph
is a read-once formula,
our lower bound on the adaptivity gap for read-once formulas
applies to the problem of $st$-connectivity. 

\subsection{Related work}
It is well-known that the SBFE problem for the Boolean OR function given by $f(x) = x_1 \vee \ldots \vee x_n$ has a simple solution:
test the variables $x_i$ in increasing order of the ratio $c_i/p_i$ until  a test reveals a variable set to true, or until all variables are tested and found to be false (cf.~\cite{Unluyurt04}). This strategy is non-adaptive, meaning that the SBFE problem for the Boolean OR function has an adaptivity gap of 1. That is, there is no benefit to adaptivity. 

Gkenosis et al.~\cite{gkenosisetal18} introduced the Stochastic Score Classification problem, which generalizes the SBFE problem for both symmetric Boolean functions and for linear threshold functions.  
Ghuge et al.~\cite{Ghuge21} showed that the Stochastic Score Classification problem has an adaptivity gap of $O(1)$.
In the unit-cost case, Gkenosis et al.\ showed the gap is at most 4 for symmetric Boolean functions, and at most  $\phi$ (the golden ratio) for the not-all-equal function~\cite{gkenosisetal18}.

Adaptivity gaps were introduced by Dean et al.~\cite{DBLP:conf/focs/DeanGV04} in the study of the stochastic knapsack problem 
which, in contrast to the SBFE problem for Boolean functions, is a maximization problem.
It has an  adaptivity gap of 4  ~\cite{DBLP:conf/focs/DeanGV04,DBLP:journals/mor/DeanGV08}.
Adaptivity gaps have also been shown for other stochastic maximization problems  (e.g.,
\cite{DBLP:conf/soda/DeanGV05,DBLP:conf/soda/GuptaNS16,DBLP:conf/approx/Bradac0Z19,HKL15,DBLP:conf/wine/AsadpourNS08}).  
Notably, the problem of
maximizing a monotone submodular function with stochastic inputs, subject to any class of prefix-closed constraints,
 was shown to have an $O(1)$ adaptivity gap~\cite{DBLP:conf/soda/GuptaNS17,DBLP:conf/approx/Bradac0Z19}.

Adaptivity gaps have also been shown for stochastic covering problems, which, like SBFE, are minimization problems.
Goemans et al.~\cite{DBLP:conf/latin/GoemansV06} showed that
the adaptivity gap for the Stochastic Set Cover problem, in which each item can only be chosen once, is $\Omega(d)$ and $O(d^2)$,
where $d$ is the size of the target set to be covered.
If the items can be used repeatedly, the adaptivity gap is $\Theta(\log d)$. 


Agarwal et al.~\cite{DBLP:conf/soda/AgarwalAK19} 
and Ghuge et al.~\cite{GhugeetalStochasticSubmod} proved bounds of $\Omega(Q)$ and $O(Q \log Q)$ respectively,
on the adaptivity gap for the more abstract Stochastic Submodular Cover Problem in which each item can only be used once. 
Applied to the special case of Stochastic Set Cover, the upper bound is $O(d \log d)$, which improves the above $O(d^2)$ bound.
They also gave bounds parameterized by the number of rounds of adaptivity allowed.
We note that, as shown by Deshpande et al.~\cite{DeshpandeGoalValue}, one approach to solving SBFE problems is to reduce them to special cases of Stochastic Submodular Cover.  However, this approach does not seem to have interesting implications for SBFE problem adaptivity gaps.

\subsection{Preliminaries}\label{preliminaries}

Consider a Boolean function $f:\{0,1\}^n \rightarrow \{0,1\}$,
a positive cost vector $c=[c_1,\ldots,c_n]$, and probability vector
$p=[p_1,\ldots, p_n]$.
We assume $c_i > 0$ and $0 < p_i < 1$ for $i \in [n]$
where $[n]$ denotes the set $\{1, \ldots, n\}$.
Let strategy $S$ be a decision tree for evaluating $f(x)$
on an unknown input $x \in \{0,1\}^n$.
We define $\cost_c(f, x, S)$ as the total cost of the variables
tested by $S$ on input $x$ until $f(x)$ is determined.
We say $x \sim p$ if $\Pr(x)=\prod_{i:x_i=1}p_i \prod_{i:x_i=0} (1-p_i)$.
Then $\cost_{c,p}(f, S) := \E_{x \sim p}[\cost_c(f,x,S)]$
is the expected evaluation cost of strategy $S$ when
$x$ is drawn according to the product distribution induced by $p$.

For fixed $n$, let $\A$ be
the set of adaptive strategies on $n$ variables 
and $\N$ be the set of non-adaptive strategies on $n$ variables.
We are interested in the quantities 
\begin{align*}
    \OPT_\A (f,c,p) := \min_{S \in \A} \cost_{c,p}(f,S)
    \mbox{ and }
    \OPT_\N (f,c,p) := \min_{S \in \N} \cost_{c,p}(f,S).
\end{align*}
We will omit $c$ and $p$ from the notation when
the costs and probabilities are clear from context.
A {\em (Boolean) read-once formula} is a tree, each of whose internal nodes are labeled either $\vee$ or $\wedge$. The internal nodes have two or more children. 
Each leaf is labeled with a Boolean variable $x_i \in \{x_1, \ldots, x_n\}$. The formula computes a Boolean function in the usual way.\footnote{Some definitions of a read-once formula allow negations in the internal nodes of the formula. By DeMorgan's laws, these negations can be ``pushed'' into the leaves of the formula,
resulting in a formula whose internal nodes are $\vee$ and $\wedge$, such that each variable $x_i$ appears in at most one leaf.}
A (Boolean) {\em DNF formula} is a formula of the form $T_1 \vee T_2 \vee \ldots T_m$ for some $m \geq 1$, such that each {\em term} $T_i$ is the conjunction ($\wedge$) of literals. A literal is a variable $x_i$ or a negated variable $\neg x_i$. The DNF formula is read-once if distinct terms contain disjoint sets of variables, without negations.
Read-once DNF formulas whose terms all contain the same number $w$ of literals are sometimes known as tribes formulas of width $w$ (cf.~\cite{ODonnellBook}). 

\section{Warm Up: Adaptivity Gaps for Read-Once DNFs}
\label{sec:read-once}

Let $f:\{0,1\}^n \rightarrow \{0,1\}$ be a read-once DNF formula.
Boros and \"{U}ny\"{u}lurt~\cite{BorosUnyulurt00} 
showed that the following approach gives an optimal adaptive
strategy for evaluating $f$
(this has been rediscovered in later papers~\cite{DBLP:journals/ai/GreinerHJM06,kowshik2011information,KaplanKM05}).
Let $f=T_1 \vee T_2 \ldots \vee T_k$ be a DNF formula with $k$ terms.  For each term $T_j$,
let $\ell(j)$ be the number of variables in term $T_j$.
Order the variables of $T_j$ as $x_{j_1}, x_{j_2}, \ldots x_{j_{\ell(j)}}$ in  non-decreasing order of the ratio $c_i /(1-p_i)$, i.e., so that 
$c_{j_1} /(1-p_{j_1}) \leq c_{j_2} /(1-p_{j_2}) \leq \ldots \leq c_{j_{\ell(j)}} /(1-p_{j_{\ell(j)}})$.
For evaluating the single term $T_j$, an optimal strategy
tests the variables in $T_j$ sequentially,
in the order $x_{j_1}, x_{j_2}, \ldots x_{j_{\ell(j)}}$, until a variable is found to be false, or until all variables are tested and found to be true.

Denote the probability of the term evaluating to true as $P(T_j) = \prod_{i=1}^{\ell(j)} p_{j_i}$ and the expected cost of this evaluation of the term
as $$C(T_j) = \sum_{i=1}^{\ell(j)} (\sum_{k=1}^i c_{j_k}\prod_{r=1}^{i-1} p_{j_r}).$$
An optimal algorithm for evaluating $f$ applies the above strategy sequentially to the terms $T$ of $f$, in non-decreasing order of
the ratio $C(T)/P(T)$, until either some term is found to be satisfied by $x$,
so $f(x)=1$,
or all terms have been evaluated and found to be falsified by $x$,
so $f(x) =0$. 
We will use this optimal adaptive strategy
in the remainder of the section.

In what follows, we will frequently describe non-adaptive strategies as performing the $n$ possible tests in a particular order.  We mean by this that the permutation representing the strategy lists the tests in this order.
The testing stops when the value of $f$ can be determined.


\subsection{Unit Costs and the Uniform Distribution}


\begin{algorithm}
    \DontPrintSemicolon
    \caption{Evaluating a read-once DNF where each variable has unit cost and uniform distribution.}\label{alg:algorithm}
    \SetKwInOut{Input}{Input}
    \SetKwInOut{Output}{Output}

    \Input{$n > 0$, read-once DNF $f:\{0,1\}^n \rightarrow \{0,1\}$ with $m$ terms}
    \Output{$\pi$ \tcp*{$O(\log n)$-approximation non-adaptive strategy for $f$}}
    $\pi \gets []$ \tcp*{empty list}
    \For{$i=1$ \KwTo $m$}{
        \eIf(\tcp*[f]{$T_i$ is $i$th shortest term in $f$}){$|T_i| \leq 2 \log n$}{$\pi \gets \pi \text{ } + $ all variables in $T_i$}{$\pi \gets \pi \text{ } + $ first $2 \log n$ variables in $T_i$}
    }
    $\pi \gets \pi \text{ } + $ remaining variables not in $\pi$
\end{algorithm}

We begin by showing that
the adaptivity gap for read-once DNFs, in the case of 
unit costs and the uniform distribution, 
is at most $O(\log n)$. 

\begin{theorem}\label{thm:rodnf-uu-upper}
Let 
$f:\{0,1\}^n \rightarrow \{0,1\}$ be a read-once DNF formula.
For unit costs and the uniform distribution,
there is a non-adaptive strategy $S$ such that 
$\cost(f,S) \leq O(\log n) \cdot \OPT_{\A}(f)$.
\end{theorem}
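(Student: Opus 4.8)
The plan is to analyze the non-adaptive strategy $S$ output by Algorithm~\ref{alg:algorithm} and compare its expected cost, piece by piece, against a closed form for $\OPT_\A(f)$. First I would pin down the shape of the optimal adaptive strategy in this regime: by Boros--\"Uny\"ulurt it evaluates the terms one at a time, in nondecreasing order of $C(T)/P(T)$, stopping as soon as some term is satisfied. With unit costs and the uniform distribution a term of length $\ell$ has $P(T) = 2^{-\ell}$ and $C(T) = 2 - 2^{1-\ell} \in [1,2)$, so $C(T)/P(T) = 2^{\ell+1}-2$ is strictly increasing in $\ell$; hence the adaptive order is exactly shortest-term-first, which is also the order used by Algorithm~\ref{alg:algorithm}. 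Writing $w_1 \le w_2 \le \dots \le w_m$ for the term lengths and $q_i = 2^{-w_i}$, this gives $\OPT_\A(f) = \sum_{j=1}^m C(T_{(j)})\prod_{i<j}(1-q_i) \ge \sum_{j=1}^m \prod_{i<j}(1-q_i)$, together with the trivial bound $\OPT_\A(f) \ge 1$.

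Next I would split $S$ into a \emph{short phase} (all variables of every term with $w_i \le 2\log n$, grouped by term, in order) and a \emph{long phase} (the $2\log n$ probe variables of each long term, then all remaining variables); let $I_L$ be the set of long terms. The key structural observation is that during the short phase $f(x)$ can only be resolved to $1$ -- never to $0$, since no long-term variable has been tested -- so the short phase mimics the adaptive strategy run on the short terms: term $T_{(j)}$ is reached and fully tested iff each of $T_{(1)},\dots,T_{(j-1)}$ already contains a tested $0$, an event of probability $\prod_{i<j}(1-q_i)$. Thus the expected short-phase cost is $\sum_{j:\,w_j \le 2\log n} w_j \prod_{i<j}(1-q_i) \le 2\log n \cdot \OPT_\A(f)$, using $w_j \le 2\log n$. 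For the long phase I condition on reaching it, which happens with probability $\rho := \prod_{i:\,w_i \le 2\log n}(1-q_i) \le 1$; conditioned on that, every short term already has a tested $0$, so the probes determine $f$ unless some long term's first $2\log n$ variables are all $1$, an event of probability at most $m\cdot 2^{-2\log n} \le 1/n$ by a union bound. Hence the conditional expected long-phase cost is at most $2\log n\,|I_L| + n\cdot(1/n) = 2\log n\,|I_L| + 1$, and its total contribution to $\cost(f,S)$ is at most $\rho(2\log n\,|I_L| + 1)$.

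It remains to charge the long-phase contribution against $\OPT_\A(f)$, which is the one estimate that needs care. Here I would use that long terms have $q_i < 2^{-2\log n} = n^{-2}$, so for any long index $j$, $\prod_{i<j}(1-q_i) = \rho\cdot\prod_{i:\,w_i>2\log n,\,i<j}(1-q_i) \ge \rho(1-n^{-2})^m \ge \rho/2$; summing over the $|I_L|$ long terms yields $\OPT_\A(f) \ge \sum_{j\in I_L}\prod_{i<j}(1-q_i) \ge \rho\,|I_L|/2$, i.e.\ $\rho\,|I_L| \le 2\,\OPT_\A(f)$. Combining everything, $\cost(f,S) \le 2\log n\cdot\OPT_\A(f) + \rho(2\log n\,|I_L| + 1) \le (6\log n + 1)\,\OPT_\A(f) = O(\log n)\cdot\OPT_\A(f)$ (the constant-function case being trivial). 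I expect the main obstacle to be precisely this bookkeeping: a naive analysis pays $2\log n$ per long term unconditionally, which is far too much when many short terms precede the long ones, and the resolution is to notice that the non-adaptive strategy inherits the adaptive strategy's early-termination savings on the short prefix (the factor $\rho$) and that $\rho$ is itself small enough -- since reaching the long phase means surviving every short term -- to absorb the probe cost within the claimed $O(\log n)$ factor.
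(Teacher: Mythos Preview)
Your proof is correct and follows the same overall strategy as the paper: order the terms shortest-first (as dictated by Boros--\"Uny\"ulurt with unit costs and uniform probabilities), test short terms in full, probe $2\log n$ variables of each long term, and use a union bound to show the residual tail costs at most $1$ in expectation. The one place you differ is in how you stitch the short and long pieces together. The paper handles the mixed case by splitting $f=f_1\vee f_2$ into its short and long sub-DNFs, observing that both the optimal adaptive strategy and the non-adaptive strategy evaluate $f_1$ first and then (conditionally on $f_1(x)=0$) $f_2$, and then applying the elementary inequality $(a+b)/(c+d)\le\max\{a/c,b/d\}$ to reduce to the two pure cases. You instead keep the analysis unified: you carry the ``survival probability'' $\rho=\prod_{i\text{ short}}(1-q_i)$ into the long phase explicitly and then lower-bound $\OPT_\A(f)$ term by term on the long indices to obtain $\rho\,|I_L|\le 2\,\OPT_\A(f)$, which lets you charge the probe cost directly.

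Both arguments are clean and give the same constant. The paper's decomposition is more modular (the two pure cases can be read independently), while your version avoids the case split entirely and makes transparent exactly where the factor $\rho$ is absorbed; your observation that $\prod_{i<j}(1-q_i)\ge \rho/2$ for every long $j$ (since long terms satisfy $q_i<n^{-2}$) is the direct analogue of the paper's lower bound $\OPT_\A(f_2)\ge m_2/2$ after conditioning.
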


\begin{proof}[Proof Sketch]
Using the characterization of the optimal
adaptive strategy due to Boros and 
\"{U}ny\"{u}lurt~\cite{BorosUnyulurt00},
we show that Algorithm~\ref{alg:algorithm} gives
a non-adaptive strategy that has expected cost
at most $O(\log n)$ times
the optimal adaptive strategy.
The algorithm crucially relies on the observation
that the optimal adaptive algorithm tests 
terms in non-decreasing order of length for unit
costs and the uniform distribution.
To see this, observe $C(T)/P(T)$ is non-decreasing
when terms are ordered by length in this setting.
For terms with length at most $2 \log n$,
we can test every variable without paying more than
$O(\log n)$ times the optimal adaptive strategy.
For terms with length greater than $2 \log n$,
we can test $2 \log n$ variables and only need
to continue testing with probability $1/n^2$.
\end{proof}

We complement Theorem \ref{thm:rodnf-uu-upper}
with a matching lower bound.
We prove the theorem by exhibiting a read-once
DNF with $\sqrt{n}$ identical terms.
We upper bound the optimal adaptive strategy
and argue any non-adaptive strategy has to make
$\log n$ tests per term to verify $f(x)=0$ which
occurs with constant probability.

\begin{theorem}\label{thm:rodnf-uu-lower}
Let $f:\{0,1\}^n \rightarrow \{0,1\}$
be a read-once DNF formula.
For unit costs and the uniform distribution,
$\OPT_\N(f) \geq \Omega(\log n) \cdot \OPT_\A(f)$.
\end{theorem}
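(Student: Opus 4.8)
The plan is to exhibit a tribes formula. Assume $n$ is a perfect square (the general case follows by padding), set $m=\sqrt{n}$, and let $f=T_1\vee\cdots\vee T_m$ where each term $T_j$ is a conjunction of a distinct block of $w=\sqrt{n}$ variables. Under the uniform distribution $\Pr[T_j=1]=2^{-\sqrt{n}}$, so $f(x)=0$ with probability $(1-2^{-\sqrt{n}})^{\sqrt{n}}\ge 1/2$ for $n$ large; morally, a solver must almost always certify that all $\sqrt{n}$ terms are falsified. For the adaptive bound I would use the optimal adaptive strategy recalled at the start of this section: since all terms are structurally identical, all ratios $C(T)/P(T)$ are equal, so it evaluates the terms in index order, testing the variables of a term one at a time until one is false or all are true. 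The expected cost of evaluating a single term is $\sum_{i=1}^{w}(1/2)^{i-1}<2$, and each term is reached with probability at most $1$, so $\OPT_\A(f)<2m=2\sqrt{n}$.

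The substance of the proof is a matching‑up‑to‑$\log n$ lower bound on \emph{every} non‑adaptive strategy. Fix a permutation $\pi$, put $t=c\sqrt{n}\log n$ for a small constant $c$, and let $Q$ be the set of the first $t$ variables tested; for each term set $a_j=|T_j\cap Q|$, so that $\sum_j a_j=t$. The key structural fact I would establish is: after testing $Q$, the value $f(x)$ is \emph{not} determined unless every term lying entirely inside $Q$ contains a false variable and, simultaneously, no term $T_j$ with $T_j\not\subseteq Q$ has all of its tested variables equal to $1$ — because such a $T_j$ could still be completed to a satisfied term, leaving $f$ ambiguous. Since at most $t/w=c\log n$ terms can lie entirely inside $Q$, at least $m'\ge\sqrt{n}-c\log n\ge\tfrac12\sqrt{n}$ terms have a variable outside $Q$; convexity of $a\mapsto 2^{-a}$ then gives $\sum_{j:\,T_j\not\subseteq Q}2^{-a_j}\ge m'\,2^{-t/m'}\ge\tfrac12 n^{1/2-2c}$, which exceeds $1$ once $c<1/4$ and $n$ is large. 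The events ``all tested variables of $T_j$ equal $1$'' are independent across $j$ (disjoint variable blocks), so the probability that none of them happens is at most $\exp\!\big(-\sum_j 2^{-a_j}\big)\le e^{-1}$; discarding the negligible probability ($\le(c\log n)2^{-\sqrt{n}}$) that some term sits inside $Q$ and is all‑ones, we conclude that with probability at least $1/2$ the testing cannot have stopped after $t$ steps. Hence $\cost(f,\pi)\ge t/2=\Omega(\sqrt{n}\log n)$ for every $\pi$, so $\OPT_\N(f)=\Omega(\sqrt{n}\log n)$, and dividing by the $O(\sqrt{n})$ adaptive cost yields the claimed $\Omega(\log n)$ adaptivity gap.

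I expect the delicate point to be the lower‑bound paragraph, in two places: first, pinning down exactly when $f(x)$ is determined by a set of tests — in particular isolating and dispatching the degenerate case in which an entire term sits inside the prefix $Q$; and second, correctly combining the cross‑term independence with the convexity estimate so that ``$t$ tests in total'' provably forces ``some term has only about $\log n$ tested variables, all of which are $1$ with constant probability.'' The adaptive upper bound and the $\Pr[f(x)=0]\ge 1/2$ estimate are routine by comparison.
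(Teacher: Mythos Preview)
Your construction (the $\sqrt{n}\times\sqrt{n}$ tribes formula) and your adaptive upper bound $\OPT_\A(f)<2\sqrt{n}$ coincide exactly with the paper's. The non-adaptive lower bound, however, is argued by a genuinely different route. The paper conditions on $f(x)=0$, shows that with constant probability some term has its first $\Theta(\log n)$ tested variables all equal to $1$, and then invokes the symmetry of the terms to argue that the identity of this ``hard'' term is uniformly distributed, forcing any fixed permutation to pay $\Omega(\log n)$ on $\Omega(\sqrt{n})$ terms in expectation. Your argument instead fixes a prefix $Q$ of length $t=c\sqrt{n}\log n$, counts $a_j=|T_j\cap Q|$, and uses Jensen on $a\mapsto 2^{-a}$ together with cross-term independence to show that $\sum_j 2^{-a_j}\to\infty$, so with constant probability some incompletely-tested term has all its tested bits equal to $1$ and $f$ is still undetermined. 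Your approach is more self-contained and avoids the somewhat informal ``symmetry forces $\log n$ on half the terms'' step; the paper's approach has the virtue of directly exposing which random event (a long run of $1$'s in a single term) drives the gap.

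One small point: your ``key structural fact'' is misstated as written. ``$f(x)$ is not determined unless [every term in $Q$ has a false variable and no partial term is all ones]'' reads as ``determined $\Rightarrow$ [that condition]'', which fails when some term lies entirely in $Q$ and is all ones (then $f$ \emph{is} determined, to $1$). What you actually need and later use is the implication in the other direction: if no term inside $Q$ is all ones \emph{and} some $T_j\not\subseteq Q$ has all its tested variables equal to $1$, then $f$ is undetermined. Your ``discarding the negligible probability'' step handles exactly this, so the argument is sound once the statement is corrected.
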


\subsection{Unit Costs and Arbitrary Probabilities}

We give an upper bound of the adaptivity gap for read-once DNF formulas
with unit costs and arbitrary probabilities 
in the special case where all terms
have the same number of variables.
This is known as a tribes formula \cite{ODonnellBook}.
Let the number of terms be $m$.
We now describe two non-adaptive strategies which yield
a $n/m$-approximation and a $m$-approximation, respectively.
Then, by choosing the non-adaptive strategy based on the
the number of terms $m$, we are
guaranteed a $\min \{n/m, m\} \leq O(\sqrt{n})$-approximation.

\begin{lemma}\label{lemma:n/m-approx}
    Consider a read-once DNF $f:\{0,1\}^n \rightarrow \{0,1\}$
    where each term has the same number of variables.
    For unit costs and arbitrary probabilities,
    there is a non-adaptive strategy $S \in \N$ such that $\cost(f,S) \leq n/m \cdot \OPT_\A(f)$.
\end{lemma}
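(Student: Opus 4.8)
The plan is to build a non-adaptive strategy $S$ that imitates the \emph{term order} of the optimal adaptive strategy and then to prove a pointwise inequality $\cost(f,x,S)\le (n/m)\cdot\cost(f,x,A)$ for every $x\in\{0,1\}^n$, where $A$ is the optimal adaptive strategy characterized above (due to Boros and \"{U}ny\"{u}lurt~\cite{BorosUnyulurt00}); taking the expectation over $x\sim p$ then yields the lemma. Write $w=n/m$ for the common width of the terms. Recall that $A$ evaluates the terms in a fixed order $T_{\sigma(1)},\dots,T_{\sigma(m)}$ (non-decreasing in $C(T)/P(T)$), evaluates each term by probing its variables in the sorted order until one is false or all are true, and halts the moment some term is found satisfied. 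I would define $S$ to be the permutation obtained by concatenating the variables of $T_{\sigma(1)}$, then those of $T_{\sigma(2)}$, and so on, each term's variables listed in the same sorted order $A$ uses. Since costs are unit, $\cost(f,A)=\OPT_\A(f)$, so it suffices to compare $S$ against this $A$.

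The heart of the argument is the pointwise bound. Fix $x$ and let $t=t(x)$ be the number of terms that $S$ and $A$ are forced to ``enter'': if $f(x)=1$, then $t$ is the $\sigma$-position of the first term satisfied by $x$; if $f(x)=0$, then $t=m$. On $x$, the value of $f$ is not yet determined until $T_{\sigma(1)},\dots,T_{\sigma(t-1)}$ have each been certified false and $T_{\sigma(t)}$ has been resolved, so neither strategy can stop earlier. For $S$: each of the $t-1$ earlier terms it enters is falsified but $f$ is still undetermined while $S$ probes it, so $S$ probes \emph{all} $w$ of its variables; and $S$ probes at most $w$ variables of $T_{\sigma(t)}$. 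Hence $\cost(f,x,S)\le (t-1)w+w=tw$. For $A$: it probes at least one variable of each of $T_{\sigma(1)},\dots,T_{\sigma(t-1)}$, and on $T_{\sigma(t)}$ it probes at least one variable (when $f(x)=0$) or exactly $w$ variables (when $f(x)=1$, to certify the satisfied term); in either case $\cost(f,x,A)\ge t\ge 1$, using $w\ge 1$. Therefore $\cost(f,x,S)\le tw\le w\cdot\cost(f,x,A)$, and averaging over $x\sim p$ gives $\cost(f,S)\le w\cdot\OPT_\A(f)=(n/m)\cdot\OPT_\A(f)$.

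I do not expect a genuine obstacle; the two points that need care (rather than difficulty) are: (i) justifying that ``$f(x)$ is determined'' for a read-once DNF forces $S$ to probe every variable of a term it enters prematurely and forces $A$ to spend at least one probe per falsified term and a full $w$ probes on the satisfied term --- both follow from the disjointness of the terms, which makes each term's value a function of its own variables alone; and (ii) noting why $S$ must follow $A$'s term order $\sigma$ and not an arbitrary one, since otherwise the first term $S$ enters could be satisfied while $A$'s first term is falsified, which would break the $t$-against-$t$ comparison. With these observations the proof reduces to the two-case bookkeeping sketched above.
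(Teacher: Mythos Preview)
Your proposal is correct and follows essentially the same approach as the paper: both construct the non-adaptive strategy that evaluates the terms in the optimal adaptive strategy's order (testing every variable of each term before moving on), and both establish a pointwise ratio of at most $n/m$ via the case split $f(x)=0$ versus $f(x)=1$. Your write-up is in fact a bit more careful than the paper's---you unify the two cases through the parameter $t$ and explicitly verify $\cost(f,x,A)\ge t$ using $w\ge 1$ on the satisfied term, whereas the paper treats the true term somewhat implicitly in the $f(x)=1$ case---but the argument is the same.
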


\begin{proof}[Proof of Lemma \ref{lemma:n/m-approx}]
    Consider a random input $x$ and the optimal adaptive strategy described at the start of this section.
    If $f(x)=0$, the optimal adaptive strategy must
    certify that each term is 0 which requires at least
    $m$ tests.
    Since any non-adaptive strategy will make at most $n$
    tests, the ratio between the cost incurred on $x$ by a non-adaptive strategy, and by the optimal adaptive strategy, is at most $n/m$.
    Otherwise, if $f(x)=1$, the optimal adaptive strategy
    will certify that a term is true after testing some number
    of false terms. 
    Now consider the non-adaptive version of this optimal
    adaptive strategy which
    tests terms in the same fixed order but must test all variables in
    a term before proceeding to the next term.
    For each false term that the optimal adaptive strategy
    tests, the non-adaptive strategy will test every variable
    for a total of $n/m$ tests.
    Since the optimal adaptive strategy must make at least
    one test per false term,
    the ratio between the cost incurred on $x$ by the non-adaptive strategy, and the cost incurred by the optimal strategy, is at most $n/m$.  Since the ratio $n/m$ holds for all $x$, the lemma follows.
\end{proof}

\begin{lemma}\label{lemma:m-approx}
    Consider a read-once DNF $f:\{0,1\}^n \rightarrow \{0,1\}$
    where each term has the same number of variables.
    For unit costs and arbitrary probabilities,
    there is a non-adaptive strategy $S \in \N$ with
    expected cost $\cost(f,S) \leq m \cdot \OPT_\A(f)$.
\end{lemma}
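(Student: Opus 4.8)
The plan is to exhibit one explicit non-adaptive strategy $S$ that ``parallelizes'' the optimal adaptive strategy $S_\A$ of Boros and \"Uny\"ulurt across the $m$ terms, and to prove the stronger pointwise statement that $\cost(f,x,S) \le m\cdot\cost(f,x,S_\A)$ for \emph{every} input $x\in\{0,1\}^n$. The lemma then follows by taking expectations over $x\sim p$ and using $\OPT_\N(f)\le\cost(f,S)$ together with the optimality of $S_\A$, i.e., $\E_x[\cost(f,x,S_\A)] = \OPT_\A(f)$.

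Construction of $S$: since $f$ is a tribes formula, write $f=T_1\vee\cdots\vee T_m$ with every term of width $w$, so $n=mw$. Within each term $T_l$, order its variables $x_{l_1},\dots,x_{l_w}$ by non-decreasing $c_i/(1-p_i)$ — exactly the within-term order that $S_\A$ uses. The strategy $S$ runs in $w$ rounds: in round $i$ it tests $x_{1_i},x_{2_i},\dots,x_{m_i}$, i.e.\ the $i$th variable of every term (terms in, say, the $C(T)/P(T)$ order), stopping as soon as $f(x)$ is determined; after $w$ rounds all $mw=n$ variables have been tested. Using the same within-term order as $S_\A$ is the one design choice that matters.

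The analysis splits into two cases for a fixed $x$. If $f(x)=1$, the adaptive strategy $S_\A$ must fully verify some satisfied term before it can stop, so $\cost(f,x,S_\A)\ge w$; meanwhile, since all terms have width $w$, no term is completed by $S$ before round $w$, so $S$ stops somewhere during round $w$ and pays at most $wm=n$, hence $\cost(f,x,S)\le n = m\cdot w \le m\cdot\cost(f,x,S_\A)$. If $f(x)=0$, every term is falsified; for each $l$ let $q_l\in\{1,\dots,w\}$ be the position in $T_l$'s order of its first false variable. Then $S_\A$ evaluates all $m$ terms and pays exactly $\sum_{l=1}^m q_l$, whereas $S$ — because it uses the matching within-term order — certifies $T_l$ false precisely at the end of round $q_l$, so $S$ halts by round $r^\star:=\max_l q_l$ and pays at most $r^\star m$. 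Since $\sum_l q_l\ge\max_l q_l=r^\star$, we get $\cost(f,x,S)\le r^\star m\le m\sum_l q_l = m\cdot\cost(f,x,S_\A)$.

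The only nontrivial point — the ``hard part'' — is the $f(x)=0$ case. The naive bound ``$S$ tests at most $n$ variables'' only gives a factor of $n/m=w$, which can be much larger than $m$. Recovering the factor $m$ requires the round-by-round comparison above, namely that $S$ finishes after $\max_l q_l$ rounds while $S_\A$ already pays $\sum_l q_l\ge\max_l q_l$ on the same input; and this comparison is valid only because $S$ inherits $S_\A$'s within-term variable orderings, so that term $T_l$ becomes certified-false at exactly round $q_l$ (a different order could push the first false variable of $T_l$ to round $w$, breaking the bound). The remaining steps are routine bookkeeping.
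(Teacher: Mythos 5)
Your proof is correct and takes essentially the same approach as the paper's: the paper also uses the round-robin non-adaptive strategy (inheriting the adaptive strategy's within-term order), proves the pointwise ratio $\le m$ separately for $f(x)=0$ (via $m\cdot\max_l C_l$ vs.\ $\sum_l C_l$) and for $f(x)=1$ (via $n$ vs.\ $n/m$), and then takes expectations. Your $q_l$ plays exactly the role of the paper's $C_i$, so the two arguments coincide.
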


\begin{proof}[Proof of Lemma \ref{lemma:m-approx}]
    Fix a random input $x$.
    If $f(x)=0$, the optimal adaptive strategy certifies
    that every term is false.
    Let $C_i$ be the number of tests it makes until
    finding a false variable on the $i$th term.
    Consider the non-adaptive ``round-robin''
    strategy which progresses in rounds,
    making one test in each term per round.
    Within a term, the non-adaptive strategy tests
    variables in the same fixed order as the optimal adaptive strategy.
    Then the cost of the non-adaptive strategy is
    $m \cdot \max_i C_i$ whereas the cost of the optimal adaptive
    strategy is $\sum_{i=1}^m C_i$.
    It follows that the adaptivity gap is at most $m$.
    Otherwise, if $f(x)=1$, the optimal adaptive strategy
    must certify that a term is true by making
    at least $n/m$ tests.
    Any non-adaptive strategy will make at most $n$ tests
    so the adaptivity gap is at most $m$.
\end{proof}

Together, the $O(n/m)$- and $O(m)$-approximations
imply the following result.

\begin{theorem}\label{thm:rodnf-ucap-upper}
    Let $f:\{0,1\}^n \rightarrow \{0,1\}$
    be a read-once DNF formula where each term has the same number of variables.
    For unit costs and arbitrary probabilities,
    there is a non-adaptive strategy $S \in \N$
    with $\cost(f,S) \leq O(\sqrt{n}) \cdot \OPT_\A(f)$.
\end{theorem}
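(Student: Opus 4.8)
The plan is simply to combine the two preceding lemmas by casing on the number of terms $m$. Lemma~\ref{lemma:n/m-approx} gives a non-adaptive strategy whose cost is at most $(n/m)\cdot\OPT_\A(f)$ on every input, and Lemma~\ref{lemma:m-approx} gives another non-adaptive strategy whose cost is at most $m\cdot\OPT_\A(f)$ on every input. Since both bounds are pointwise (they hold for each $x$, hence in expectation), we are free to pick, \emph{knowing $f$ and hence knowing $m$}, whichever of the two strategies has the smaller guarantee. That gives a non-adaptive strategy $S$ with
\begin{align*}
\cost(f,S) \;\le\; \min\{n/m,\; m\}\cdot \OPT_\A(f).
\end{align*}

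The only remaining point is the elementary inequality $\min\{n/m, m\} \le \sqrt{n}$ for every positive integer $m$: if $m \le \sqrt{n}$ then $m \le \sqrt{n}$ directly, and if $m > \sqrt{n}$ then $n/m < n/\sqrt{n} = \sqrt{n}$. Hence $\cost(f,S) \le O(\sqrt{n})\cdot\OPT_\A(f)$, which is the claim.

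There is really no obstacle here — the theorem is a packaging of Lemmas~\ref{lemma:n/m-approx} and~\ref{lemma:m-approx}. The one thing worth stating carefully is why selecting the better of the two strategies is legitimate: the class of allowed strategies is the set $\N$ of non-adaptive strategies on $n$ variables, and the quantity $\OPT_\N(f,c,p)$ is a minimum over $\N$, so any particular non-adaptive strategy we exhibit (in particular, the one chosen after inspecting $m$) upper-bounds $\OPT_\N$. The strategy does not need to be uniform over the function class; it only needs to be a single fixed permutation once $f$ is given. I would write the proof in three short sentences reflecting exactly the three points above.
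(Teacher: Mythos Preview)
Your proposal is correct and matches the paper's approach exactly: the paper simply states that the $O(n/m)$- and $O(m)$-approximations from Lemmas~\ref{lemma:n/m-approx} and~\ref{lemma:m-approx} together imply the theorem, which is precisely the $\min\{n/m,m\}\le\sqrt{n}$ argument you give. Your explicit justification that choosing between the two strategies (knowing $m$) is legitimate is a nice clarification the paper leaves implicit.
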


We complement Theorem \ref{thm:rodnf-ucap-upper} with a matching lower bound.
We prove the theorem by exhibiting a read-once DNF
with $2\sqrt{n}$ identical terms.
By making one special variable in each term have a low
probability of being true and arguing it must always
be tested first, the non-adaptive strategy has to 
search at random for which special variable is true
when every other special variable is false which
happens with constant probability.

\begin{theorem}\label{thm:rodnf-ucap-lower}
Let $f:\{0,1\}^n \rightarrow \{0,1\}$
be a read-once DNF formula.
For unit costs and arbitrary probabilities,
$\OPT_\N(f) \geq \Omega(\sqrt{n}) \cdot \OPT_\A(f)$.
\end{theorem}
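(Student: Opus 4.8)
The plan is to exhibit a single read-once DNF, together with unit costs and a probability vector, that realizes the gap. Assuming $\sqrt n$ is an integer, I would set $m=2\sqrt n$ and $w=\sqrt n/2$ (so $n=mw$) and take $f=T_1\vee\dots\vee T_m$, where each term $T_j$ is the conjunction of $w$ distinct variables: one \emph{special} variable $y_j$ with $\Pr[y_j=1]=p:=1/m$, and $w-1$ \emph{ordinary} variables, each true with probability $q\in(0,1)$ chosen so that $q^{\,w-1}=1/2$. All costs are $1$. The goal is to show $\OPT_\A(f)=O(\sqrt n)$ and $\OPT_\N(f)=\Omega(n)$; dividing then yields the claimed $\Omega(\sqrt n)$ gap.

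For the adaptive upper bound I would use the optimal adaptive strategy described at the start of this section. Within any term the variable of smallest probability is tested first, so $y_j$ is tested before the ordinary variables of $T_j$; it then suffices to bound the cost of the strategy that processes the (identical) terms in the order $T_1,T_2,\dots$ and stops at the first satisfied term. A term with $y_j=0$ costs a single test, while a term with $y_j=1$ costs at most $w$ tests; at most $m$ terms are ever evaluated, and since the $y_j$ are independently true with probability $p$, the expected number of evaluated terms with $y_j=1$ is at most $mp=1$. Hence the expected cost is at most $m+(w-1)mp=2\sqrt n+(w-1)=O(\sqrt n)$.

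For the non-adaptive lower bound I would fix an arbitrary permutation $\pi$ of the $n$ variables and define the bad event $B$: exactly one special variable is true (call its index $J$) and all $w-1$ ordinary variables of $T_J$ are true. By symmetry, $\Pr[B]=mp(1-p)^{m-1}q^{\,w-1}=\tfrac12(1-1/m)^{m-1}=\Omega(1)$, and conditioned on $B$ the index $J$ is uniform on $[m]$. In the event $B$, $T_J$ is the only term satisfied by $x$, so $f(x)=1$; using that $f$ is monotone and that setting all untested variables to $0$ falsifies every term other than $T_J$, the value $f(x)$ is determined by a set of tested variables if and only if that set contains all the variables of $T_J$. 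Hence on such inputs $\pi$ incurs cost exactly $\tau_\pi(J)$, the position in $\pi$ of the last-occurring variable of $T_J$. It then remains to observe that the last $n/2$ positions of $\pi$ hold $n/2$ variables meeting at least $(n/2)/w=m/2$ distinct terms, so at least $m/2$ indices $j$ have $\tau_\pi(j)\ge n/2$; therefore $\E_J[\tau_\pi(J)]=\tfrac1m\sum_j\tau_\pi(j)\ge\tfrac1m\cdot\tfrac m2\cdot\tfrac n2=n/4$. Combining, $\cost(f,\pi)\ge\Pr[B]\cdot\E[\tau_\pi(J)\mid B]=\Omega(n)$, and since $\pi$ was arbitrary, $\OPT_\N(f)=\Omega(n)$.

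I expect the main obstacles to be the two ``soft'' steps rather than any calculation: first, justifying rigorously that in the event $B$ the non-adaptive cost equals $\tau_\pi(J)$ --- this requires the precise definition of ``$f(x)$ is determined'' together with monotonicity and read-onceness to rule out $f$ being pinned down before all of $T_J$ has been tested; and second, arguing that the adaptive cost stays $O(\sqrt n)$ rather than $O(w\sqrt n)$, which relies on the fact that only $O(1)$ of the evaluated terms have a true special variable in expectation. The counting inequality $\sum_j\tau_\pi(j)\ge mn/4$, which holds for every permutation $\pi$, is the crux of the lower bound; everything else is routine bookkeeping.
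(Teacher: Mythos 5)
Your proposal is correct and follows essentially the same approach as the paper: a read-once DNF with $\Theta(\sqrt n)$ identical terms, each containing one low-probability ``special'' variable that the adaptive strategy tests first to dismiss terms cheaply, together with a conditioning on the constant-probability event that exactly one term is satisfied to force a non-adaptive strategy into a blind $\Omega(n)$-cost search. The only real difference is cosmetic (your $q^{w-1}=1/2$ versus the paper's $(\ell/m)^{1/(\ell-1)}$ for the ordinary variables), and your pigeonhole bound $\E_J[\tau_\pi(J)]\ge n/4$ makes explicit a ``by symmetry'' step that the paper leaves informal.
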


\subsection{Arbitrary Costs and the Uniform Distribution}

We prove Theorem \ref{thm:lambert} by exhibiting a
read-once DNF with $2^\ell$ terms each of length $\ell$.
Within each term, the cost of each variable increases
geometrically with a ratio of 2.
The challenge is choosing $\ell$ so that $2^\ell \ell = n$.
We accomplish this by using a modified Lambert W function 
\cite{bronstein2008algebraic}
which is how we calculate $n_\epsilon$.

\begin{theorem}\label{thm:lambert}
    For all $\epsilon > 0$, 
    there exists $n_{\epsilon} > 0$ such that the
    following holds for all read-once DNF formulas $f:\{0,1\}^n \rightarrow \{0,1\}$ where $n > n_{\epsilon}$: There exists a cost assignment such that for the uniform distribution,
    $\OPT_\N(f) \geq \Omega(n^{1-\epsilon}/\log n)
    \cdot \OPT_\A(f)$.
\end{theorem}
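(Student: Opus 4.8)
The plan is to witness the statement by an explicit family: for each $n > n_\epsilon$ I would build a specific read-once DNF $f = f_n$ on $n$ variables together with a cost assignment, so the construction genuinely depends on $n$. Let $\ell = \ell(n)$ be the largest integer with $2^{\ell}\ell \le n$; this is where the modified Lambert W function enters, since $\ell$ is (the integer part of) the inverse of $x \mapsto 2^{x}x$ evaluated at $n$, and it yields $\ell = \Theta(\log n)$ with $2^{\ell}\ell = \Theta(n)$. Put $m = \lfloor n/\ell\rfloor$, so $m = \Theta(2^{\ell}) = \Theta(n/\log n)$. Let $f$ be the OR of $m$ variable-disjoint terms $T_1,\dots,T_m$, each a conjunction of $\ell$ fresh variables, plus one auxiliary term holding the $n - m\ell < \ell$ leftover variables. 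Take the uniform distribution $p_i = 1/2$, give the variables of each $T_j$ the costs $1,2,4,\dots,2^{\ell-1}$ (the geometric-ratio-$2$ profile) and unit cost to the leftover variables, and choose $n_\epsilon$ large enough that $\ell(n) \ge 1$ and $n^{\epsilon} \ge \log_2 n$ whenever $n > n_\epsilon$.

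For the adaptive upper bound I would use the Boros and \"{U}ny\"{u}lurt optimal strategy recalled at the start of this section: since the $T_j$ are identical it may process terms in any order, and inside a term it tests cheapest-first and stops at the first false variable. One term has unconditional expected evaluation cost $C(T) = \sum_{i=1}^{\ell} (2^{i}-1)2^{-(i-1)} = 2\ell - 2 + 2^{1-\ell} = O(\ell)$, and the probabilities of reaching successive terms decay geometrically with ratio $1-2^{-\ell}$, so $\OPT_\A \le 2^{\ell}C(T) + O(1) = O(2^{\ell}\ell) = O(n)$. I would also record here that $\Pr[f(x)=0] = \Theta(1)$: this needs $m \le O(2^{\ell})$ so that $(1-2^{-\ell})^{m} = \Theta(1)$, which the choice of $\ell$ guarantees.

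The core is to show $\OPT_\N = \Omega\!\big((n/\log n)^{2}\big)$ for this instance; then $\OPT_\N/\OPT_\A = \Omega(n/\log^{2}n) \ge \Omega(n^{1-\epsilon}/\log n)$ once $n^{\epsilon}\ge\log_2 n$. Fix any non-adaptive permutation $\pi$. Call $T_j$ \emph{deep} (for the drawn $x$) if its $\ell-1$ cheapest variables are all true, and set $A = \{f(x)=0\} \cap \{\text{some }T_j\text{ is deep}\}$. I would first show $\Pr[A] = \Theta(1)$: conditioned on $f(x)=0$ the indicators "$T_j$ deep" are independent (variable-disjoint terms) with probability $\Theta(2^{-\ell})$, so the number of deep terms is essentially $\mathrm{Poisson}(\Theta(1))$ and is positive with probability $\ge 1 - 1/e$. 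Under $A$, a deep term's \emph{only} false variable is its cost-$2^{\ell-1}$ variable, so $\pi$'s stopping prefix must contain the cost-$2^{\ell-1}$ variable of every deep term before $f(x)=0$ is certified. Now $\pi$ induces a fixed ordering $\tau$ of the $m$ cost-$2^{\ell-1}$ variables (one per term), and by the symmetry of the identical terms the deep set is, conditioned on $A$, a uniformly random size-$K$ subset of $\{1,\dots,m\}$ ($K\ge1$) independent of $\tau$; hence the prefix must reach at least the $\tau$-last deep term, whose expected $\tau$-rank is $\tfrac{K(m+1)}{K+1} = \Omega(m)$. Therefore $\E[\cost(f,x,\pi)\mid A] \ge 2^{\ell-1}\cdot\Omega(m) = \Omega(2^{2\ell}) = \Omega\!\big((n/\log n)^{2}\big)$, and multiplying by $\Pr[A] = \Theta(1)$ gives the bound for every $\pi$.

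The one delicate step is the $\Omega(m)$ estimate inside the previous paragraph — the point that non-adaptivity forces paying for a constant fraction of \emph{all} $m$ expensive variables, not just one per deep term, because the strategy commits to $\tau$ before learning which (random) terms are deep. Getting this right means (i) verifying that conditioning on $f(x)=0$ perturbs each variable's bias by only $O(2^{-\ell})$, so the deep-term count stays $\mathrm{Poisson}(\Theta(1))$ and the deep set stays a uniform random subset, and (ii) noting that the cost-$2^{\ell-1}$ variables lying in any prefix of $\pi$ themselves form a prefix of $\tau$, which is what makes "reach the $\tau$-last deep term" meaningful. Everything else — the Lambert-W choice of $\ell$, the auxiliary leftover term (which adds only $O(1)$ to $\OPT_\A$ and cannot decrease $\OPT_\N$), and the final arithmetic — is bookkeeping, and the gap between the $\Omega(n/\log^{2}n)$ actually obtained and the weaker $\Omega(n^{1-\epsilon}/\log n)$ in the statement comfortably absorbs any looseness needed to handle an arbitrary $n$.
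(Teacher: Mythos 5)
Your proof is correct (reading the theorem, as the paper's own proof does, as exhibiting a witnessing formula and cost assignment), and it uses essentially the same construction as the paper: $\Theta(n/\log n)$ disjoint terms of length $\ell\approx\log_2(n/\log_2 n)$, costs $1,2,\ldots,2^{\ell-1}$ inside each term, uniform distribution. The difference is in how the non-adaptive lower bound is extracted. The paper conditions on $f(x)=1$ with \emph{exactly one true term} (probability at least $1/(2e)$) and argues that, since the order is fixed and the true term is uniform by symmetry, the strategy must in expectation pass the cost-$2^{\ell-1}$ variables of about half the terms before it can complete the true term, giving $\OPT_\N=\Omega(m\cdot 2^\ell)$; it then balances $m=2^\ell$ and $\ell$ exactly via the equation $n=n^{1-r_n}\log_2(n^{1-r_n})$ and simply assumes $\ell$ is an integer. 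You instead condition on $f(x)=0$ together with the existence of a ``deep'' term whose only false variable is its expensive one, and you make the symmetry step explicit via the induced order $\tau$ on the $m$ expensive variables and the expected maximum rank $K(m+1)/(K+1)\ge(m+1)/2$ of a uniform deep set; this is precisely the rank argument the paper leaves implicit in the phrase ``pay $2^\ell$ for half the terms in expectation,'' applied to a $0$-certificate rather than a $1$-certificate, and both routes give the same $\Omega(m\cdot 2^\ell)=\Omega(2^{2\ell})$ against $\OPT_\A=O(m\ell)=O(n)$. Your integrality handling (largest $\ell$ with $2^\ell\ell\le n$, a leftover unit-cost term, and absorbing the logarithmic slack via $n^\epsilon\ge\log_2 n$) is more careful than the paper's, and it yields the explicit bound $\Omega(n/\log^2 n)$, which implies the stated $\Omega(n^{1-\epsilon}/\log n)$. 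Two harmless nits: your $C(T)=\sum_i(2^i-1)2^{-(i-1)}$ is a cumulative-cost overestimate of the true per-term expected cost $\ell$, which is fine since you only need an upper bound; and the ``Poisson'' description of the deep-term count is informal, but the only fact you need, namely at least one deep term with constant conditional probability, follows as you say from term-wise conditional independence and $m2^{-\ell}=\Theta(1)$.
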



\section{Main Result:
Read-Once Formulas}\label{sec:formula}

\begin{theorem}\label{thm:read-once-gap}
    Fix $\epsilon>0$.
    There is a read-once formula
    $f:\{0,1\}^n \rightarrow \{0,1\}$, such that
    for unit costs and $p_i=\frac{1+\epsilon}{2}$ for all $i \in [n]$,
    $\OPT_\N(f) \geq 
    \Omega\left(\epsilon^3 n^{1-2\epsilon/\log 2}\right)
    \cdot \OPT_\A(f)$.
\end{theorem}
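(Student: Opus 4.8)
The plan is to construct $f$ as a balanced, alternating AND/OR tree (a read-once formula on a complete tree of depth $2d$ with fan-out $b$ at each level, so $n = b^{2d}$), choosing the fan-out $b$ and depth $d$ as functions of $\epsilon$. With $p_i = \frac{1+\epsilon}{2}$ slightly above $1/2$, I would pick the fan-out so that an AND of $b$ leaves is true with probability roughly $1/2$: since $\left(\frac{1+\epsilon}{2}\right)^b \approx 1/2$ requires $b \approx \frac{\ln 2}{\ln(2/(1+\epsilon))} \approx \frac{\ln 2}{\epsilon}$ (to first order in $\epsilon$), each AND-gate is an roughly fair coin, and by symmetry each OR-of-ANDs two-level block is also roughly balanced. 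Iterating, every gate in the tree outputs $1$ with probability bounded away from $0$ and $1$; this ``criticality'' is the engine that makes the tree hard to evaluate non-adaptively.

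First I would bound $\OPT_\A(f)$ from above. The standard recursive adaptive strategy for an AND-gate tests children one at a time and stops as soon as a child evaluates to $0$; because each child is (recursively) a balanced Boolean value, the expected number of children examined is $O(1)$, and the expected cost satisfies a recursion $C(2k+2) \le O(1)\cdot C(2k)$ across a two-level AND/OR block. Unwinding over $d$ blocks gives $\OPT_\A(f) \le 2^{O(d)}$, which I would compare against $n = b^{2d}$ to get $\OPT_\A(f) \le n^{O(1/\log b)} = n^{O(\epsilon/\ln 2)}$, using $b = \Theta(\ln 2 / \epsilon)$. So adaptively, only a polynomially small (in $n$, with a small exponent) amount of the formula needs to be read.

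Next — and this is the crux — I would lower bound $\OPT_\N(f)$. Fix any permutation $\pi$ of the $n$ leaves; the non-adaptive strategy reads leaves in this order until $f$ is determined. The key claim is that with constant probability the algorithm is forced to read $\Omega(n^{1-2\epsilon/\ln 2})$ leaves, i.e. essentially a constant fraction (up to the small adaptive-savings exponent) of the formula. The argument: for $f(x)$ to be determined by a prefix $P$ of tested leaves, $P$ must contain a ``certificate'' — a sub-tree witnessing the value (an AND-gate all of whose relevant children-certificates are present, etc.). Following Fu et al.\ / the branching-process viewpoint alluded to in the section intro, I would show that the expected size of the minimal certificate, and more importantly the probability that \emph{no} small certificate is contained in the first $t$ tested leaves, is controlled by a branching-process recursion: because every gate is critical, the number of leaves one is ``forced'' to read to resolve a gate, under a worst-case (adversarial to the algorithm, but the algorithm commits first) ordering, grows multiplicatively up the tree. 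Quantitatively, at a balanced OR-gate the algorithm must typically resolve several children before one comes back $1$ (or all come back $0$), and symmetrically at AND-gates; a martingale / concentration argument (Azuma or a Chernoff bound on the branching process population, exactly the ``concentration inequalities'' the outline promises) shows that with probability $\Omega(\epsilon^3)$ — the source of the $\epsilon^3$ factor, coming from variance terms at a constant number of levels — the resolved sub-tree has $\Omega(b^{2d}) = \Omega(n)$ leaves, up to corrections. Combining the two bounds yields
\begin{align*}
\frac{\OPT_\N(f)}{\OPT_\A(f)} \;\ge\; \frac{\Omega(\epsilon^3\, n)}{n^{O(\epsilon/\ln 2)}} \;=\; \Omega\!\left(\epsilon^3\, n^{1-2\epsilon/\log 2}\right),
\end{align*}
after absorbing the constant in the $O(\cdot)$ exponent into the stated $2\epsilon/\log 2$.

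The main obstacle I anticipate is the non-adaptive lower bound, specifically making rigorous the claim that \emph{every} fixed leaf-ordering is bad. The adaptive upper bound is a routine recursion; the delicate part is that the adversary (the analysis) must handle an arbitrary $\pi$ chosen with full knowledge of the tree structure. I would handle this by arguing about expected certificate size under $x \sim p$ conditioned on the prefix, showing the bound is uniform over $\pi$: the branching-process recursion for ``number of leaves whose value is still needed'' does not depend on $\pi$ in its expectation, only on the gate-probabilities, and the concentration step then upgrades the expectation bound to a constant-probability event. Getting the exponent exactly $1 - 2\epsilon/\log 2$ rather than $1 - c\epsilon$ for an unspecified constant will require carefully tracking that $\log_b n = 2d$ with $b = (\ln 2/\epsilon)(1 + O(\epsilon))$, so that $n^{1/(2d)} = b$ and the adaptive cost $2^{O(d)} = b^{O(d)/\log b} = n^{O(1/\log b)} = n^{O(\epsilon/\ln 2)}$; pinning the hidden constant to $2$ is where the constant-fan-out structure and the precise choice of $p_i$ must be exploited.
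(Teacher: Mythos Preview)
Your construction and analysis diverge from the paper's and, more importantly, contain errors that would prevent the argument from going through as written.

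\textbf{The construction is different, and your parameter choice is miscalculated.} The paper does \emph{not} use an alternating AND/OR tree with large fan-out. It takes a depth-$d$ \emph{binary} tree (fan-out $2$ throughout) and places the variables on the \emph{edges}; $f(x)=1$ iff some root-to-leaf path has all edges true. This is read-once because recursively each node computes $(x_{e_1}\wedge f_1)\vee(x_{e_2}\wedge f_2)$. The ``branching process'' is then literally Galton--Watson with offspring distribution $\mathrm{Bin}(2,\tfrac{1+\epsilon}{2})$ and mean $\mu=1+\epsilon$, which is what drives everything. In your plan, you set $b$ so that $\bigl(\tfrac{1+\epsilon}{2}\bigr)^b\approx\tfrac12$, and conclude $b\approx \ln 2/\epsilon$. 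But $\bigl(\tfrac{1+\epsilon}{2}\bigr)^b=\tfrac12$ gives $b=\dfrac{\ln 2}{\ln 2-\ln(1+\epsilon)}\approx 1+\dfrac{\epsilon}{\ln 2}$, not $\ln 2/\epsilon$; so $b$ would have to be essentially $1$ or $2$, collapsing back to a binary tree. Moreover, even if each AND gate had output probability $\tfrac12$, an OR of $b$ such gates outputs $1$ with probability $1-2^{-b}$, not $\tfrac12$; your ``by symmetry each OR-of-ANDs block is also roughly balanced'' claim is simply false, and with it the criticality picture on which your adaptive and non-adaptive estimates rest.

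\textbf{The non-adaptive lower bound is the heart of the proof and your plan does not supply the key idea.} The paper's argument is \emph{not} a martingale/Azuma bound on certificate size. It first lower-bounds cost by charging only for tests on \emph{leaf} edges (internal edges free), which lets one assume WLOG a ``leaf-last'' permutation. Then an exchange argument (Lemma~\ref{lemma:nonincreasingprob}) shows that for the optimal such permutation, the probability $p_\ell$ of first discovering an alive leaf on the $\ell$th leaf test is non-increasing in $\ell$. This monotonicity feeds an elementary rearrangement inequality (Lemma~\ref{lemma:earthmover}): $\sum_\ell \ell\,p_\ell \ge \sum_{\ell\le L'} \ell\, p$ with $p=\bigl(\tfrac{1+\epsilon}{2}\bigr)^d$ and $L'\approx \epsilon/p$, the latter coming from $\sum_\ell p_\ell=\Pr(f=1)\ge\epsilon$ via Cantelli's inequality applied to the branching-process population (Lemma~\ref{lemma:not2many}). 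This yields $\OPT_\N\ge \Omega(\epsilon^2)\,n^{1-\epsilon/\log 2}$. The adaptive upper bound is simply that DFS from the root pays at most twice the expected number of alive edges, which is $\sum_{i\le d}(1+\epsilon)^i=O(n^{\epsilon/\log 2}/\epsilon)$; dividing gives the $\epsilon^3 n^{1-2\epsilon/\log 2}$ gap. The $\epsilon^3$ does not come from ``variance terms at a constant number of levels''; it is one factor from the survival probability, one from $L'$, and one from the geometric sum in the adaptive bound.
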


Before we prove Theorem \ref{thm:read-once-gap},
we describe the read-once formula $f$ and
present the technical lemmas we use in the proof.
Without loss of generality, assume
$n=2 \cdot 2^d-2$ for some positive
integer $d$.
We define the function $f(x)$ on inputs $x\in \{0,1\}^n$
in terms of a binary tree with depth $d$. 
The edges of the tree are numbered $1$ through $n$,
and variable $x_i$  corresponds to edge $i$.
Each variable $x_i$ has a 
$\frac{1+\epsilon}{2}$ probability
of being true.
Say that a leaf of the tree is ``alive'' if
$x_i=1$ for all edges $i$ on the path from the root to the leaf.
We define $f(x)=1$ if and only if at 
least one leaf of the tree is alive.
A strategy for evaluating $f$ will continue testing until it can certify that there is at least one alive leaf, or that no alive leaf exists.
\begin{figure}[ht]
    \centering
    \includegraphics[scale=.16]{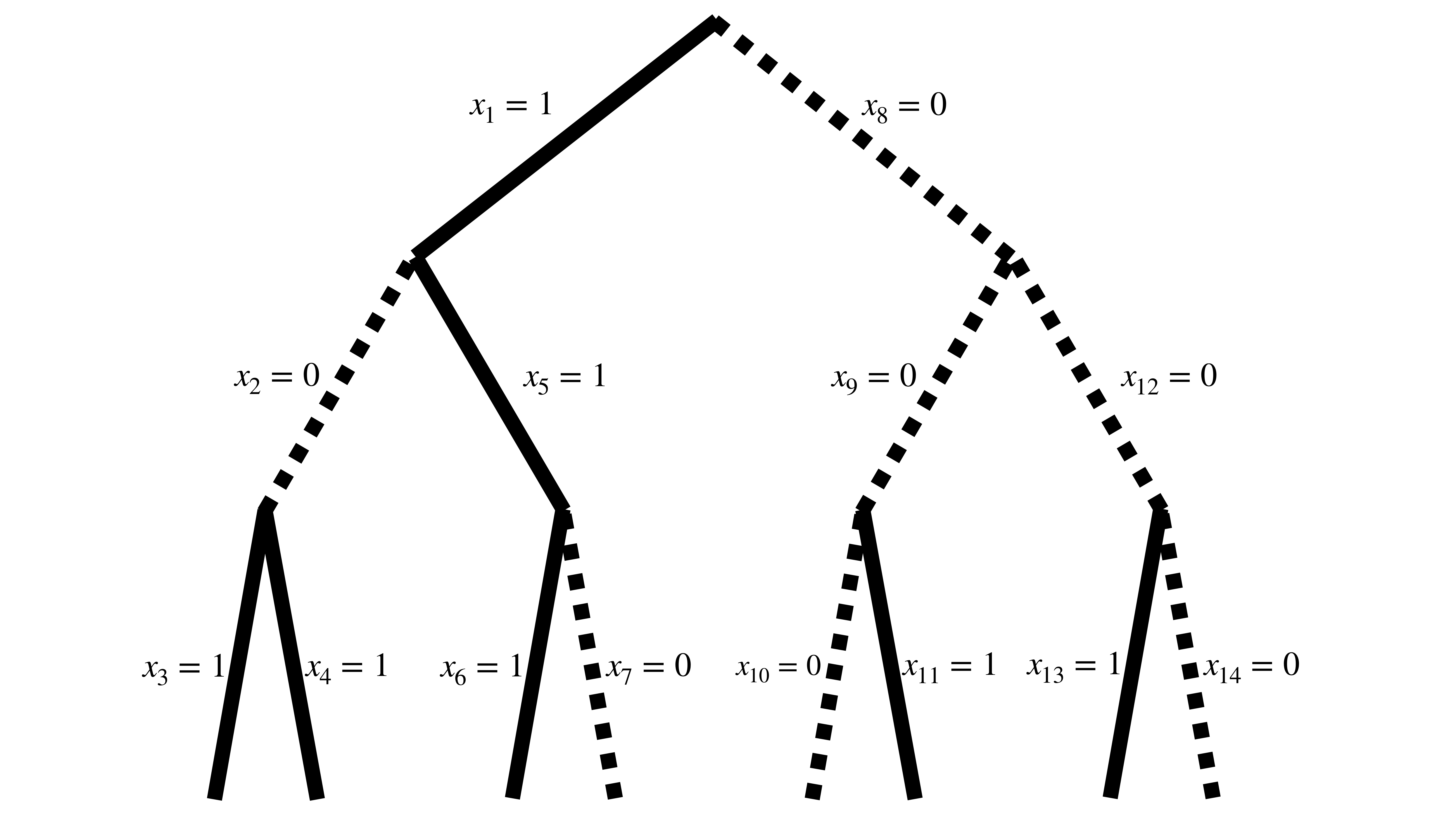}
    \caption{The binary tree corresponding
    to the read-once formula we construct
    when $n=14$. In particular,
    $f(x)=(x_1\wedge((x_2\wedge(x_3\vee x_4))\vee(x_5\wedge(x_6\vee x_7))))\vee (x_8\wedge((x_9\wedge(x_{10}\vee x_{11}))\vee(x_{12}\wedge(x_{13}\vee x_{14}))))$.
    Notice that $f(x)=1$ for this $x$
    because the third leaf from the left is alive
    (all its ancestors are true).}
    \label{fig:branching_tree}
\end{figure}

Now consider the multi-graph that is produced from the tree
by merging all leaves into a single node.
The function $f$ is the $st$-connectivity function of this multi-graph.
It is easy to show, by induction on the depth of the tree, that the multi-graph is two-terminal series-parallel, with $s$ the root, and $t$ the node produced by merging the leaves of the tree.
Thus $f$ is computed by a read-once formula.  

We refer to the edges of the tree that join a leaf to its parent as leaf edges, and the other edges as internal edges.
 We say that a non-adaptive strategy $S$ is leaf-last if it first tests
all non-leaf edges of the tree, and then tests the leaf edges.  

In the proof of Theorem \ref{thm:read-once-gap},
we consider an alternative cost assignment where we pay unit costs for the tests on leaf edges, as usual, but tests on internal edges are free. 
The expected cost of a strategy under the usual unit cost assignment
is clearly lower bounded by its expected cost when internal edges are free.
Note that when internal edges are free,
there is no disadvantage in performing all the tests on internal edges first, so
there is an optimal non-adaptive strategy which is leaf-last
in the sense that all the leaf edges appear last.
Our first technical lemma describes a property of a leaf-last strategy.
We defer the proof of this lemma, and of the ones that follow, to the end of this section.
In all of the lemma statements, we assume $f$ is as just described, and expected costs are with respect to unit costs and test probabilities $p_i = \frac{1+\epsilon}{2}$.  We use $L$ to denote the number of leaves in the tree.


\begin{lemma}\label{lemma:nonincreasingprob}
There exists a leaf-last non-adaptive strategy $S$ for evaluating $f$
which,
conditioned on the event that there is at least one alive leaf,
has minimum expected cost when internal edges are free relative to all non-adaptive strategies. Further, for any such $S$ and any $\ell \in [L-1]$, 
   %
    conditioned on the existence of at least one alive leaf, the probability that $S$
    first finds an alive leaf on the $\ell$th leaf test
    is at least the probability $S$
    first finds an alive leaf on the
    $(\ell+1)$st leaf test. 
\end{lemma}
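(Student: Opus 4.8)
The plan is to establish the two assertions in turn; the second follows from an exchange (adjacent-transposition) argument that invokes the optimality of $S$ only against nearby strategies. First I would record the relevant ``certificate'' facts. When internal edges are free, the cost of a non-adaptive strategy on an input $x$ equals the number of leaf-edge tests it performs before $f(x)$ is determined; and since the formula is read-once, when $f(x)=1$ the value $f(x)$ becomes determined exactly at the moment some leaf $v$ has every edge on its root-to-$v$ path revealed and true. Now take an arbitrary non-adaptive strategy $S$, let $\sigma=(v_1,\dots,v_L)$ be the order in which $S$'s permutation lists the $L$ leaf edges, and let $S'$ be the leaf-last strategy that tests all internal edges first (in any order) and then tests the leaf edges in the order $\sigma$. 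I claim that on every $x$ with $f(x)=1$, $S'$ makes at most as many leaf tests as $S$: let $k=k(x)$ be the $\sigma$-position of the first alive leaf; after its internal phase $S'$ can certify a $1$ precisely upon reaching the leaf edge of an alive leaf, so $S'$ stops exactly after its $k$th leaf test, whereas if $S$ stopped after only $j<k$ leaf tests then the leaf whose path got completed has its leaf edge among the first $j$ of $\sigma$, hence is some $v_m$ with $m\le j<k$ that is alive, contradicting the minimality of $k$. Taking expectations over $x$ conditioned on $\{f=1\}$ shows $S'$ is no worse than $S$ in conditional cost; since there are only finitely many permutations, the minimum conditional cost over all non-adaptive strategies is attained, and by the claim it is attained by a leaf-last strategy. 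This proves the first assertion.

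For the second assertion, fix a leaf-last $S$ attaining that minimum, with leaf order $\sigma=(v_1,\dots,v_L)$. For $j\in[L]$, let $A_j$ be the event that every internal edge on the path to $v_j$ is true ($v_j$ is ``promising'') and $B_j$ the event that $v_j$'s leaf edge is true; the events $B_1,\dots,B_L$ are mutually independent, and $v_j$ is alive iff $A_j\cap B_j$. Set $E_j=\bigcap_{i\le j}\overline{A_i\cap B_i}$. By the first part, on $\{f=1\}$ the number of leaf tests $S$ makes equals $K:=\min\{j:\,A_j\cap B_j\text{ holds}\}$, so ``$S$ first finds an alive leaf on the $\ell$th leaf test'' is the event $\{K=\ell\}$, and $\Pr[K=\ell]=\Pr[E_{\ell-1}\cap A_\ell\cap B_\ell]$ while $\Pr[K=\ell+1]=\Pr[E_{\ell-1}\cap\overline{A_\ell\cap B_\ell}\cap A_{\ell+1}\cap B_{\ell+1}]$. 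Let $\sigma^{(\ell)}$ be $\sigma$ with $v_\ell$ and $v_{\ell+1}$ exchanged, and let $K'$ be the first-alive index for the leaf-last strategy with leaf order $\sigma^{(\ell)}$; by the first part this strategy's conditional cost equals $\E[K'\mid f=1]$, so optimality of $S$ gives $\E[K\mid f=1]\le\E[K'\mid f=1]$.

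It remains to turn this into monotonicity of $\Pr[K=\ell]$. A short case analysis --- split on whether some $v_i$ with $i\le\ell-1$ is alive, and if not, on which of $v_\ell,v_{\ell+1}$ is alive --- shows that $K-K'$ equals $-1$ exactly on $E_{\ell-1}\cap(A_\ell\cap B_\ell)\cap\overline{A_{\ell+1}\cap B_{\ell+1}}$, equals $+1$ exactly on $E_{\ell-1}\cap\overline{A_\ell\cap B_\ell}\cap(A_{\ell+1}\cap B_{\ell+1})$, and is $0$ on the rest of $\{f=1\}$. Hence $0\ge\Pr[f=1]\cdot(\E[K\mid f=1]-\E[K'\mid f=1])=\Pr[K=\ell+1]-\Pr[E_{\ell-1}\cap A_\ell\cap B_\ell\cap\overline{A_{\ell+1}\cap B_{\ell+1}}]$, so $\Pr[K=\ell+1]\le\Pr[E_{\ell-1}\cap A_\ell\cap B_\ell\cap\overline{A_{\ell+1}\cap B_{\ell+1}}]\le\Pr[E_{\ell-1}\cap A_\ell\cap B_\ell]=\Pr[K=\ell]$. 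Dividing by $\Pr[f=1]>0$ yields the claimed inequality for every $\ell\in[L-1]$.

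The main obstacle I anticipate is bookkeeping the sign and the slack in the exchange step: the transposition naturally compares $\Pr[K=\ell+1]$ against $\Pr[E_{\ell-1}\cap A_\ell\cap B_\ell\cap\overline{A_{\ell+1}\cap B_{\ell+1}}]$ rather than against $\Pr[K=\ell]$ itself, and one must observe that this quantity is at most $\Pr[K=\ell]$ (drop the last conjunct). A secondary point requiring care is the ``leaf-last loses nothing'' claim, which genuinely uses the read-once certificate characterization --- moving an internal test ahead of the leaf edges below it can never shorten a $1$-certificate. I note that nothing here depends on $p$ being $\frac{1+\epsilon}{2}$ or on the tree being complete or binary: the argument works for any product distribution and any leaf order, which is precisely why invoking optimality of $S$ only through adjacent transpositions suffices.
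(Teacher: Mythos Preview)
Your proposal is correct and follows essentially the same approach as the paper: both establish existence of an optimal leaf-last strategy and then use an adjacent-transposition exchange argument together with optimality to deduce monotonicity of the first-success probabilities. The paper packages the exchange step as a proof by contradiction (assume $p_{\ell'}(S)<p_{\ell'+1}(S)$, swap, and use ``moving a leaf test earlier can only raise its chance of being first'' plus conservation of $p_{\ell'}+p_{\ell'+1}$ to get a strictly cheaper strategy), whereas you compute $K-K'$ pointwise and read off the inequality directly; these are the same argument with different bookkeeping, and your version is somewhat more explicit about the first assertion and the certificate structure.
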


The next lemma gives us an
inequality that we will use to lower bound
the cost of the optimal non-adaptive strategy.

\begin{lemma}\label{lemma:earthmover}
    Let $L$ be a positive integer
    and $p_1 \geq p_2 \geq \ldots \geq p_L$
    be non-negative real numbers.
    Now let $p\geq p_1$ and define
    $L' = \lfloor \sum_{\ell=1}^L p_\ell/p \rfloor$.
    Then
    $
        \sum_{\ell=1}^L \ell p_\ell
        \geq \sum_{\ell=1}^{L'} \ell p.
    $
\end{lemma}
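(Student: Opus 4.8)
The plan is to prove Lemma~\ref{lemma:earthmover} by a direct ``transport'' argument: I will rewrite both sides as sums of prefix tails and then verify the inequality essentially term by term, using only the two hypotheses that each $p_\ell \le p$ and that the total mass $m := \sum_{\ell=1}^L p_\ell$ satisfies $L'p \le m$.

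First I would introduce the partial sums $P_k := \sum_{\ell=1}^k p_\ell$ with $P_0 = 0$, so $P_L = m$. The starting point is the elementary identity $\sum_{\ell=1}^L \ell\, p_\ell = \sum_{k=1}^L (m - P_{k-1})$, which follows by writing $\ell = \sum_{k=1}^\ell 1$ and swapping the order of summation. The right-hand side of the lemma telescopes in the same spirit: reindexing by $\ell = L'-k+1$ gives $\sum_{\ell=1}^{L'} \ell\, p = \sum_{k=1}^{L'}\bigl(L'p - (k-1)p\bigr)$. So it suffices to show $\sum_{k=1}^L (m - P_{k-1}) \ge \sum_{k=1}^{L'}\bigl(L'p - (k-1)p\bigr)$.

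Next I would record two consequences of the hypotheses: (i) since $p \ge p_1 \ge p_\ell$ for every $\ell$, we get $m \le Lp$, hence $L' = \lfloor m/p\rfloor \le L$, and also $P_{k-1} \le (k-1)p$ for each $k$; and (ii) from $L' \le m/p$ we get $L'p \le m$. Using (i), every summand $m - P_{k-1}$ with $k \le L$ is nonnegative (as $P_{k-1} \le P_L = m$), so discarding the tail terms $k = L'+1,\dots,L$ yields $\sum_{k=1}^L (m - P_{k-1}) \ge \sum_{k=1}^{L'} (m - P_{k-1})$. Finally, for each $k \le L'$, combining (i) and (ii) gives $m - P_{k-1} \ge m - (k-1)p \ge L'p - (k-1)p$, and summing over $k \in [L']$ produces exactly the rewritten right-hand side. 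Chaining these inequalities completes the proof.

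I do not anticipate a real obstacle here; the argument is short, and the only care needed is index bookkeeping — checking $L' \le L$ so that the term-dropping step is meaningful, and that the dropped terms are nonnegative. As an alternative, more conceptual justification (which I would mention only as intuition), one can view $\sum_\ell \ell\, q_\ell$, over nonnegative non-increasing sequences with $q_\ell \le p$ and $\sum_\ell q_\ell = m$ fixed, as minimized by pushing all the mass toward the smallest indices, i.e.\ $q_1 = \dots = q_{L'} = p$, $q_{L'+1} = m - L'p$, and the rest zero; evaluating the objective at this extreme configuration and discarding the single nonnegative term $(L'+1)(m - L'p)$ gives the bound. I would present the self-contained prefix-sum version in the paper.
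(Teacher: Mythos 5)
Your proof is correct, and it takes a genuinely different route from the paper's. The paper decomposes $\sum_{\ell=1}^L \ell p_\ell$ by splitting at $L'$, substituting $p_\ell = p - \delta_\ell$ in the head of the sum, and then showing the two correction terms $-\sum_{\ell=1}^{L'}\ell\delta_\ell + \sum_{\ell=L'+1}^L \ell p_\ell$ combine to something nonnegative via the chain $\sum_{\ell=1}^{L'}\ell\delta_\ell \le L'\sum_{\ell=1}^{L'}\delta_\ell \le L'\sum_{\ell=L'+1}^L p_\ell \le \sum_{\ell=L'+1}^L \ell p_\ell$; the monotonicity of the $p_\ell$ plays no explicit role beyond $p_\ell \le p$. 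You instead use the Abel-type identity $\sum_{\ell=1}^L \ell p_\ell = \sum_{k=1}^L (m - P_{k-1})$ and a matching reindexing of the target sum, reducing the whole thing to the termwise inequality $m - P_{k-1} \ge L'p - (k-1)p$, which drops straight out of $P_{k-1} \le (k-1)p$ and $L'p \le m$. The payoff of your formulation is that it is purely local: no global correction sums to compare, and the two hypotheses ($p_\ell \le p$ and $L'p \le m$) are each used in exactly one place. The paper's version avoids the prefix-sum rewrite and is slightly shorter to typeset, but the logical structure is less transparent. Both are fine; yours is arguably the cleaner exposition.
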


Our analysis depends on there being at least constant probability that $f(x)=1$, or equivalently, that there is at least one alive leaf.
%
The next lemma assures us that this is indeed the case.  The proof of the lemma depends on our choice of having each $p_i$ be slightly larger than 1/2; it would not hold otherwise.  

\begin{lemma}\label{lemma:not2many}
With probability at least $\epsilon$,
there is at least one alive leaf in the binary tree representing $f$. 
\end{lemma}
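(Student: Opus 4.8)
The plan is to analyze the number of alive leaves as a branching process. For each node $v$ of the binary tree at depth $j$ (the root is at depth $0$), $v$ is "reachable" if all edges on the root-to-$v$ path are true; a leaf is alive exactly when it is reachable. Let $Z_j$ be the number of reachable nodes at depth $j$, so $Z_0 = 1$ and the tree-evaluation process is precisely a Galton--Watson branching process in which each reachable node independently produces each of its two children with probability $q := \frac{1+\epsilon}{2}$, giving mean offspring $\mu = 2q = 1+\epsilon > 1$. We want $\Pr[Z_d \geq 1] \geq \epsilon$. Equivalently, letting $\eta_d = \Pr[Z_d = 0]$ be the probability of extinction by generation $d$, we must show $\eta_d \leq 1-\epsilon$ for every $d$ (and hence for the specific $d$ with $n = 2\cdot 2^d - 2$).

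First I would set up the standard recursion for extinction probabilities. If $g(s) = \E[s^{\text{offspring}}] = (1-q) + qs)^2 = \bigl((1-q)+qs\bigr)^2$ is the offspring generating function, then $\eta_0 = 0$ and $\eta_{j+1} = g(\eta_j)$. The sequence $\eta_j$ is increasing and converges to the smallest fixed point $\eta_\infty \in [0,1)$ of $g$ (the ultimate extinction probability), which is $<1$ precisely because $\mu > 1$. So it suffices to show $\eta_\infty \leq 1-\epsilon$, since $\eta_d \leq \eta_\infty$ for all $d$. Solving $g(s) = s$, i.e. $\bigl((1-q)+qs\bigr)^2 = s$, one root is $s=1$ and the other is $s = \bigl(\frac{1-q}{q}\bigr)^2 = \bigl(\frac{1-\epsilon}{1+\epsilon}\bigr)^2$. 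Thus $\eta_\infty = \bigl(\frac{1-\epsilon}{1+\epsilon}\bigr)^2$, and it remains only to check the elementary inequality $\bigl(\frac{1-\epsilon}{1+\epsilon}\bigr)^2 \leq 1-\epsilon$ for $\epsilon \in (0,1)$; this reduces to $(1-\epsilon)^2 \leq (1+\epsilon)^2(1-\epsilon)$, i.e. $(1-\epsilon) \leq (1+\epsilon)^2$, which holds since $(1+\epsilon)^2 = 1 + 2\epsilon + \epsilon^2 \geq 1 \geq 1-\epsilon$. Therefore $\Pr[\text{at least one alive leaf}] = 1 - \eta_d \geq 1 - \eta_\infty \geq \epsilon$.

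The main obstacle is conceptual rather than computational: recognizing that the event "some leaf is alive" is exactly the non-extinction event of a Galton--Watson process and that monotonicity of $\eta_d$ in $d$ lets us reduce to the ultimate extinction probability, so that a bound holding for all depths follows from one fixed-point computation. The remaining work is just solving a quadratic and verifying a degree-two polynomial inequality, both routine. I would remark explicitly, as the lemma statement already hints, that the strict inequality $q > 1/2$ (equivalently $\epsilon > 0$) is essential: when $q \leq 1/2$ the process is critical or subcritical, $\eta_\infty = 1$, and the probability of an alive leaf tends to $0$ as $d \to \infty$.
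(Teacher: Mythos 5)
Your proof is correct, and it takes a genuinely different route from the paper's. The paper also models the count of alive leaves as a Galton--Watson process with $\mu = 1+\epsilon$, but instead of analyzing extinction probabilities it computes the mean and variance of $Z_d$ via the standard branching-process identities $\E[Z_d] = \mu^d$ and $\Var{Z_d} = (\mu^{2d}-\mu^d)\sigma^2/(\mu(\mu-1))$, bounds $\Var{Z_d} \leq \tfrac{1}{2\epsilon}\E[Z_d]^2$, and then applies Cantelli's inequality to conclude $\Pr(Z_d > 0) \geq \tfrac{2\epsilon}{1+2\epsilon} \geq \epsilon$. You instead work directly with the offspring generating function $g(s) = ((1-q)+qs)^2$, use the recursion $\eta_{j+1}=g(\eta_j)$ with $\eta_0 = 0$, note monotone convergence of $\eta_d$ to the ultimate extinction probability $\eta_\infty$, and solve the fixed-point equation exactly to get $\eta_\infty = \bigl(\tfrac{1-\epsilon}{1+\epsilon}\bigr)^2 \leq 1-\epsilon$. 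Your argument is tighter in two respects: it yields the exact limiting survival probability rather than a second-moment lower bound, and it covers the whole admissible range $\epsilon\in(0,1)$ whereas the paper's Cantelli step requires $\epsilon \leq 1/2$. The paper's version has the modest advantage of avoiding the generating-function fixed-point machinery, relying only on moment formulas and a one-sided Chebyshev bound. One cosmetic note: your expression for $g(s)$ has an unbalanced parenthesis in the intermediate step; the final form $\bigl((1-q)+qs\bigr)^2$ is the correct one.
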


With these key lemmas in hand, we prove
Theorem \ref{thm:read-once-gap}.

\begin{proof}[Proof of Theorem \ref{thm:read-once-gap}]
We will show that the adaptivity gap is large.
Intuitively, we rely on the fact that
if there is at least one alive leaf, 
then an adaptive strategy can find an alive leaf cheaply, 
by beginning at the root of the tree
and moving downward only along edges that are alive.
In contrast, a non-adaptive strategy cannot stop searching along
``dead'' branches.
However, it is not immediately clear that the cost of the
non-adaptive strategy is high because there are conditional
dependencies between the probabilities that two leaves with the same
ancestor(s) are alive.
To prove the desired result, we need to show that,
despite these dependencies, the optimal non-adaptive strategy
must have high expected cost.

We begin by showing that the expected cost
of any non-adaptive strategy is at least
$\frac{\epsilon^2}{16} n^{1-\frac{\epsilon}{\log 2}}$.
We want to lower bound the expected cost
of the optimal strategy $\OPT_\N(f)$:
\begin{align}\label{eq:dnf-leaf-cost}
    \min_{S \in \N} \E_{x}[\cost(f, x, S)]\geq
    \min_{S} \E[\cost^L(f, x, S)] =
    \min_{S'}
    \E[\cost^L(f, x, S')]
\end{align}
where $\cost^L(f, x, S)$ is the number of
leaf tests $S$ makes on $x$ until
$f(x)$ is determined, and
$S'$ is a leaf-last strategy.
Then
\begin{align*}
    (\ref{eq:dnf-leaf-cost}) 
    &= \min_{S'} \left(
    \sum_{x:f(x)=1} \Pr(x) \cdot \cost^L(f, x, S')
    + \sum_{x:f(x)=0} \Pr(x) \cdot \cost^L(f, x, S')
    \right) \\
    &\geq \min_{S'}
    \sum_{x:f(x)=1} \Pr(x) \cdot \cost^L(f, x, S') \\
    &= \min_{S'}
    \sum_{\ell=1}^L \ell \Pr(\textrm{$S'$ first finds
    alive leaf on $\ell$th leaf test})
\end{align*}
where $L=2^d$ is the number of 
leaves in the binary tree.
Initially, all leaves have a $\left(\frac{1+\epsilon}{2}\right)^d$
probability of being alive
where $d = \log_2((n+2)/2)$.
By Lemma \ref{lemma:nonincreasingprob},
the probability that the next leaf is alive 
cannot increase as the optimal
non-adaptive strategy $S^*$ performs its test.
Set $$p=\left(\frac{1+\epsilon}{2}\right)^d$$ and 
$p_\ell = \Pr(\textrm{$S^*$ first finds
alive leaf on $\ell$th test})$.
Therefore, Lemma \ref{lemma:earthmover} with $p$ and $p_\ell$
tells us that 
\begin{align}\label{eq:binarynonlower}
    \sum_{\ell=1}^L \ell p_\ell
    \geq \sum_{\ell=1}^{L'} \ell 
    \left(\frac{1+\epsilon}{2}\right)^d
    \geq \left(\frac{1+\epsilon}{2}\right)^d
    \frac{\epsilon^2}
    {8\left(\frac{1+\epsilon}{2}\right)^{2d}}
    \geq \frac{\epsilon^2}
    {8 (\frac{n+2}{2})^{\log_2(\frac{1+\epsilon}{2})}}
    \geq \frac{\epsilon^2}{16} 
    n^{1-\frac{\epsilon}{\log 2}}
\end{align}
where we use the inequality that
$L' \geq \epsilon/(2\left(\frac{1+\epsilon}{2}\right)^d)$.
To see this,
recall that $\sum_{\ell=1}^L p_\ell/p \ge L'$
and, since the right-hand side is greater than 1,
$2L' \ge \sum_{\ell=1}^L p_\ell/p$.
By Lemma \ref{lemma:not2many},
$\sum_{\ell=1}^L p_\ell \ge \epsilon$ so
$L' \geq \epsilon/(2\left(\frac{1+\epsilon}{2}\right)^d)$.
The last inequality in Equation (\ref{eq:binarynonlower}) follows from
$\log_2(\frac{1+\epsilon}{2}) \leq \frac{\epsilon}{\log 2}-1$
which can be shown by comparing the $y$-intercepts
and derivatives for $\epsilon > 0$.

Next, we show that the expected cost
of the adaptive strategy is at most
$(n+1)^{\frac{\epsilon}{\log 2}}/\epsilon$.
Consider an adaptive strategy which starts by querying the
two edges of the root
and recurses as follows:
if an edge is alive it queries
its two child edges and otherwise stops.
Observe that this simple depth-first search
adaptive strategy will make at most two tests
for every alive edge in the binary tree.
Therefore the expected number of tests an adaptive
strategy must make is at most twice
the expected number of alive edges.
By the branching process analysis in the proof of Lemma
\ref{lemma:not2many},
twice the expected number of alive edges is
\begin{align}\label{eq:binaryadaptupper}
    2 \sum_{i=0}^d (1+\epsilon)^i
    \leq 2 \sum_{i=0}^{\log_2 n} (1+\epsilon)^i
    = 2 \frac{(1+\epsilon)^{\log_2 (n)+1}-1}
    {(1+\epsilon)-1}
    \leq 4 \frac{n^{\log_2(1+\epsilon)}}{\epsilon}
    \leq 4 \frac{n^{\frac{\epsilon}{\log 2}}}
    {\epsilon}
\end{align}
where the last inequality follows
from $\log_2(1+\epsilon) \leq \frac{\epsilon}{\log 2}$
which we can see by comparing the $y$-intercepts
and slopes for $\epsilon>0$.
Then Theorem \ref{thm:read-once-gap} follows
from Equations (\ref{eq:binarynonlower})
and (\ref{eq:binaryadaptupper}).
\end{proof}

\begin{proof}[Proof of Lemma \ref{lemma:nonincreasingprob}]
    A leaf-last non-adaptive strategy $S$ satisfying the given conditional optimality property clearly exists, because any non-adaptive strategy can be made leaf-last by moving the leaf tests to the end without affecting its cost when internal edges are free.
    Define $p_\ell(S)$ as the probability that
    $S$ finds an alive leaf for the first
    time on leaf test $\ell$.
    We may write the expected number of
    leaf tests of $S$ as
    $\sum_{\ell=1}^L \ell p_\ell(S)$.
    
    Now suppose for contradiction that
    there is some $\ell'$ such that
    $p_{\ell'}(S) < p_{\ell'+1}(S)$.
    Let $S'$ be $S$
    but with the $\ell'$th and $(\ell'+1)$th
    tests swapped.
    We will show that the expected
    number of leaf tests made by $S'$
    is strictly lower than the expected
    number of leaf tests made by $S$.
    Observe that
    \begin{align*}
    \sum_{\ell=1}^L \ell p_\ell(S)
    - \sum_{\ell=1}^L \ell p_\ell(S')
    = \ell' (p_{\ell'}(S)-p_{\ell'}(S'))
    + (\ell'+1)(p_{\ell'+1}(S)-p_{\ell'+1}(S')).
    \end{align*}
    Notice that
    $p_{\ell'}(S) < p_{\ell'+1}(S) \leq p_{\ell'}(S')$
    where the first inequality
    follows by assumption
    and the second inequality follows
    because moving a test on a particular leaf edge to appear
    earlier in the permutation
    can only increase the probability
    that its leaf is the first alive leaf found.
    In addition, since
    the combined probability we first find an alive
    leaf in either the $\ell'$th or 
    $(\ell'+1$)th test is the same
    in either order of tests,
    $p_{\ell'}(S)+p_{\ell'+1}(S)
    = p_{\ell'}(S')+p_{\ell'+1}(S')$.
    Together,
    we have that
    $-(p_{\ell'}(S)-p_{\ell'}(S'))=
    p_{\ell'+1}(S)-p_{\ell'+1}(S')
    > 0$.
    Therefore
    $\sum_{\ell=1}^L \ell p_\ell(S)
    - \sum_{\ell=1}^L \ell p_\ell(S')=
    p_{\ell'+1}(S)-p_{\ell'+1}(S') > 0$
    and $S'$ makes fewer leaf tests
    in expectation even though $S$
    was optimal by assumption.
    A contradiction!
\end{proof}

\begin{proof}[Proof of Lemma \ref{lemma:earthmover}]
Define $\delta_\ell = p - p_\ell \geq 0$
for $\ell \in [L']$.
Observe that $L' \leq L$ since
$p\geq p_\ell$ for all $\ell \in [L']$.
Then
\begin{align}\label{eq:delta_ell}
    \sum_{\ell=1}^L \ell p_\ell =
    \sum_{\ell=1}^{L'} \ell (p-\delta_\ell) +
    \sum_{\ell=L'+1}^L \ell p_\ell =
    \sum_{\ell=1}^{L'} \ell p
    - \sum_{\ell=1}^{L'} \ell \delta_\ell +
    \sum_{\ell=L'+1}^L \ell p_\ell.
\end{align}
Now all that remains to be shown is that
the sum of the last two terms in
Equation (\ref{eq:delta_ell}) is non-negative.
Notice that
\begin{align*}
    \sum_{\ell=1}^{L'} p_\ell
    + \sum_{\ell=L'+1}^{L} p_\ell \ge
    \sum_{\ell=1}^{L'} p
    \implies
    \sum_{\ell=L'+1}^L p_\ell \ge
    \sum_{\ell=1}^{L'} (p-p_\ell) =
    \sum_{\ell=1}^{L'} \delta_\ell.
\end{align*}
Then
\begin{align*}
    \sum_{\ell=1}^{L'} \ell \delta_\ell \leq
    L' \sum_{\ell=1}^{L'} \delta_\ell \leq
    L' \sum_{\ell=L'+1}^L p_\ell \leq
    \sum_{\ell=L'+1}^L \ell p_\ell.
\end{align*}
\end{proof}

\begin{proof}[Proof of Lemma \ref{lemma:not2many}]
Let $Z_i$ denote the number of alive
edges at level $i$.
Then the statement of Lemma \ref{lemma:not2many} becomes 
$\Pr(0< Z_d) \geq \epsilon$.
Using standard results from the study of branching
processes, we know that
\begin{align*}
  \E[Z_d] = \mu^d
  \hspace{1em} \textnormal{ and } \hspace{1em}
  \Var{Z_d} = \left(\mu^{2d} - \mu^{d} \right)
  \frac{\sigma^2}{\mu (\mu-1)}
\end{align*}
where
$\mu$ is the expectation
and $\sigma$ is the variance
of the number of alive ``children'' from a single alive edge
(see e.g., p. 6 in Harris \cite{harris1963theory}).
In our construction,
\begin{align*}
   \mu = 2\left(\frac{1+\epsilon}{2}\right)
    = (1+\epsilon)
   \hspace{1em} \textnormal{ and } \hspace{1em}
   \sigma^2 
   = 2\left(\frac{1+\epsilon}{2}\right)
   \left(1-\frac{1+\epsilon}{2}\right)
   =\frac{1-\epsilon^2}{2}
\end{align*}
since the children of one edge follow the binomial
distribution.
Then $\E[Z_d] = (1+\epsilon)^d$
and
\begin{align*}
    \Var{Z_d} = \left( (1+\epsilon)^{2d}
    - (1+\epsilon)^d \right)
    \frac{1-\epsilon^2}{2(1+\epsilon)\epsilon}
    \leq \frac{1}{2\epsilon}
    \left((1+\epsilon)^{d}\right)^2
    = \frac{1}{2\epsilon} \E[Z_d]^2.
\end{align*}

We will now use Cantelli's inequality
(see page 46 in Boucheron et al. \cite{boucheron2013concentration})
to show that $\Pr(Z_d > 0) \geq \epsilon$.
Cantelli's tells us that 
$\Pr(X - \E[X] \geq \lambda) 
\leq \frac{\Var{X}}{\Var{X}+\lambda^2}$
for any real-valued random variable $X$ and 
$\lambda > 0$.
Choose $X= -Z_d$ and $\lambda = \E[Z_d]$.
Then
\begin{align*}
    \Pr(Z_d \leq \E[Z_d] - \E[Z_d])
    \leq \frac{\Var{Z_d}}{\Var{Z_d}+\E[Z_d]^2}
\end{align*}
and, by taking the complement,
\begin{align}\label{eq:applied-cantelli}
    \Pr(Z_d > 0) \geq 
    \frac{\E[Z_d]^2}{\Var{Z_d}+\E[Z_d]^2}
    \geq \frac{\E[Z_d]^2}
    {\frac{1}{2\epsilon}\E[Z_d]^2
    +\E[Z_d]^2} = \frac{2\epsilon}{1+2\epsilon}
    \geq \epsilon
\end{align}
for $0 < \epsilon \leq 1/2$.
Then Lemma \ref{lemma:not2many}
follows from Equation (\ref{eq:applied-cantelli}).
\end{proof}

\section{DNF Formulas}
\label{sec:dnf}

We will show near-linear and linear in $n$ lower 
bounds for DNF formulas under the uniform distribution with
unit and arbitrary costs, respectively.
Since the function we exhibit has linear terms,
the lower bounds also apply to the class of linear-size DNF formulas.

\begin{theorem}\label{thm:nlognlower}
Let $f:\{0,1\}^n \rightarrow \{0,1\}$
be a DNF formula.
For unit costs and the uniform distribution,
$\OPT_\N(f) \geq \Omega(n/\log n) \cdot \OPT_\A(f)$.
\end{theorem}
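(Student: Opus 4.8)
The plan is to exhibit a single DNF $f$ (with unit costs, uniform distribution) on which the optimal adaptive strategy pays only $O(\log n)$ in expectation while \emph{every} non-adaptive strategy pays $\Omega(n)$. The natural construction is the ``address-decoding'' DNF: let $k = \Theta(\log n)$, split the variables into $k$ \emph{address} variables $a_1,\dots,a_k$ and $2^k$ \emph{data} variables $y_0,\dots,y_{2^k-1}$, with $n = k + 2^k$. For each $j \in \{0,1\}^k$ put a term $T_j$ that is the AND of the $k$ address literals spelling out $j$ together with the single data variable $y_j$; so $f(x)=1$ iff $y_{a}=1$ where $a$ is the integer encoded by the address bits. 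The DNF has $2^k$ terms, each of width $k+1 = \Theta(\log n)$.

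First I would bound the adaptive cost. An adaptive strategy reads the $k$ address variables (cost $k = O(\log n)$), learns the address $a$, then reads $y_a$ (cost $1$), and is done; so $\OPT_\A(f) \le k+1 = O(\log n)$. (One must check this genuinely \emph{determines} $f$: once the address bits are fixed to $a$, the only term not already falsified is $T_a$, whose value is exactly $y_a$, so a single further test suffices.)

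Next, the non-adaptive lower bound, which is the crux. Fix any permutation $\pi$. Consider the random input $x$. I would argue that, with constant probability over $x$, the strategy is forced to read $\Omega(2^k)=\Omega(n)$ variables before $f(x)$ is determined. Here is the mechanism: condition on the event $E$ that (i) exactly one address variable, the last one among $a_1,\dots,a_k$ to appear in $\pi$, is read only after at least $2^{k-1}$ data variables have been read — actually cleaner is to condition directly on the value of the address $a$. For a non-adaptive strategy the order in which variables are probed does not depend on outcomes, so before the strategy has read \emph{all} of $a_1,\dots,a_k$ it does not know which single data variable $y_a$ is the ``relevant'' one; and before it has read $y_a$ it cannot certify $f(x)=0$ (some unfalsified term $T_j$ with $j$ not yet ruled out could be satisfied), nor $f(x)=1$. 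Formally: the value $f(x)$ is not determined by a set $Q$ of tested variables unless either $Q$ contains some term entirely set to true by $x$, or $Q$ falsifies every term. On a random $x$ with $f(x)=0$ (which has probability $\ge \tfrac14$, since with probability $\ge\tfrac12$ the relevant $y_a$ is $0$, and conditioned on that $f(x)=0$ whenever no \emph{other} term fires, but in this construction no other term can fire once the address is fixed, so in fact $f(x)=0$ exactly when $y_a=0$, probability $\tfrac12$), determination requires falsifying $T_a$, i.e. requires having read at least one of the $k$ address bits that $x$ sets ``wrong'' for $T_a$ \emph{or} the variable $y_a$. The key point: among the $2^k$ possible addresses, for the relevant one $a = a(x)$, the term $T_a$ is falsified by $x$ only through $y_a$ (all address bits of $T_a$ are satisfied, since $a$ is the true address). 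Hence to certify $f(x)=0$ the strategy \emph{must} have tested $y_{a(x)}$. Since $a(x)$ is uniform over $2^k$ values and independent of $\pi$'s order, the expected number of data variables read before hitting $y_{a(x)}$ is $\tfrac{1}{2}(2^k+1) = \Omega(n)$; more precisely with probability $\ge \tfrac12$ the position of $y_{a(x)}$ in the data-variable sub-order of $\pi$ is $\ge 2^{k}/2$, and none of those earlier data-variable reads can terminate the strategy either (reading $y_j$ for $j\ne a(x)$ neither satisfies a term — $T_j$ still needs its address bits — nor is it one of the reads needed to falsify $T_{a(x)}$). Combining, $\OPT_\N(f) \ge \Pr[f(x)=0]\cdot\Omega(2^k) = \Omega(n)$, and dividing by $\OPT_\A(f)=O(\log n)$ gives the claimed $\Omega(n/\log n)$.

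The main obstacle I anticipate is making the ``cannot terminate early'' argument fully rigorous for an \emph{arbitrary} permutation that interleaves address and data tests in a clever order — in particular ruling out that reading a handful of data variables $y_j$ plus a handful of address bits lets the strategy certify $f(x)=1$ on a large-probability event, or certify $f(x)=0$ via falsifying all $2^k$ terms cheaply. The resolution is the structural fact that falsifying $T_j$ for a \emph{wrong} address $j$ costs the strategy nothing extra only if it has already read an address bit on which $j$ disagrees with $a(x)$ — so after reading all $k$ address bits (cost $\le k$, negligible) exactly one term survives, namely $T_{a(x)}$, and from then on the only way to finish is to read $y_{a(x)}$; since the order is fixed in advance and $a(x)$ is uniform and independent of that order, with constant probability $y_{a(x)}$ sits in the second half of whatever order the data variables were scheduled, forcing $\Omega(2^k)$ data reads regardless. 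I would also double-check the arithmetic $n = k+2^k$ so that $k = \Theta(\log n)$ and $2^k = \Theta(n)$, and note the lower-bound instance has $2^k = \Theta(n)$ terms, so it indeed lies in the linear-size DNF class as claimed.
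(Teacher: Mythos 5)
Your proposal is correct and takes essentially the same approach as the paper: the same address-decoding DNF with $\Theta(\log n)$ shared address bits and $\Theta(n)$ per-term dedicated data bits, the same $O(\log n)$ adaptive upper bound, and the same observation that a fixed permutation must blindly search for the one relevant dedicated variable (the paper phrases this by noting one may make the shared variables free, whereas you argue directly that conditioned on $f(x)=0$ the strategy must reach $y_{a(x)}$, whose position in the data sub-order is uniform and independent of the address). The two write-ups differ only in which of these equivalent bookkeeping devices is used to formalize the $\Omega(n)$ non-adaptive lower bound.
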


\begin{proof}
Without loss of generality,
assume $n=2^d + d$ for some positive integer $d$.
Consider the address function $f$ with
$2^d$ terms which each consist of
$d$ shared variables appearing in all terms, and a
single dedicated variable appearing only in that term.
We may write $f = T_0\vee T_1\vee\cdots\vee T_{2^d-1}$
where $T_i$ consists of the shared variables
negated according to the binary representation of $i$
and the single dedicated variable.

By testing the $d$ shared variables, the optimal
adaptive strategy can learn which single term
is unresolved and test the corresponding dedicated
variable in a total of $d+1$ tests.
In contrast, any non-adaptive strategy has to search
for the unresolved dedicated test at random which
gives expected $2^d/2$ cost.
(We can ensure the non-adaptive strategy tests
the shared variables first by making them free which can
only decrease the expected cost.)
It follows that the adaptivity gap
is $\Omega(2^d/d) = \Omega(n/\log n)$.
\end{proof}

%
%

We can easily modify the address function
in the proof of Theorem \ref{thm:nlognlower}
to prove an $\Omega(n)$ lower bound
for DNF formulas under the uniform
distribution and arbitrary costs.
In particular, make the cost of each shared variable $1/d$.
Then the adaptive strategy
pays $d\cdot(1/d) + 1=2$
while the non-adaptive strategy still
pays $\Omega(n)$.
The $O(n)$ upper bound comes from the
increasing cost strategy and analysis in
\cite{DBLP:conf/stoc/KaplanKM05}.

\begin{theorem}\label{thm:nlower}
Let $f:\{0,1\}^n \rightarrow \{0,1\}$
be a DNF formula.
For the uniform distribution,
$\OPT_\N(f) \geq \Theta(n) \cdot \OPT_{\A}(f)$.
\end{theorem}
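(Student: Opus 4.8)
The plan is to recycle the address function from the proof of Theorem~\ref{thm:nlognlower}, but with a cost assignment that makes the shared variables cheap, and then to combine the resulting $\Omega(n)$ lower bound on the gap with the universal $O(n)$ upper bound of Kaplan et al.~\cite{DBLP:conf/stoc/KaplanKM05}. Concretely, I would take $n = 2^d + d$, let $f = T_0 \vee \cdots \vee T_{2^d-1}$ be the address function ($d$ shared variables, occurring in $T_i$ negated according to the binary representation of $i$, plus one dedicated variable $y_i$ per term), set $p_i = \tfrac12$ for all $i$, and assign cost $1/d$ to each shared variable and cost $1$ to each dedicated variable. This cost choice is legitimate because the adaptivity gap takes a supremum over $c$.

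For the adaptive side I would argue $\OPT_\A(f) \le 2$: test the $d$ shared variables first, at total cost $d \cdot (1/d) = 1$; their outcomes determine a unique address $i^*$, after which every term other than $T_{i^*}$ is falsified and $T_{i^*}$ reduces to its dedicated variable, so $f(x) = y_{i^*}$; one further test of cost $1$ settles $f(x)$.

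The delicate step is the lower bound $\OPT_\N(f) = \Omega(n)$. Fix an arbitrary non-adaptive strategy $S$, i.e.\ a permutation of the $n$ tests, and condition on the event $f(x) = 1$, which has probability $\tfrac12$ (it holds iff $y_{i^*}=1$, and $y_{i^*}$ is independent of the shared variables that determine $i^*$). On this event $f(x) = y_{i^*}$ with $y_{i^*}$ independent of everything else, so $S$ cannot stop before it has tested $y_{i^*}$; hence every dedicated variable preceding $y_{i^*}$ in $S$'s order, together with $y_{i^*}$ itself, is tested, contributing at least $r$ to the cost, where $r$ is the rank of $y_{i^*}$ among the $2^d$ dedicated variables in $S$'s permutation. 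Since $i \mapsto (\text{rank of } y_i \text{ in } S)$ is a bijection onto $\{1,\dots,2^d\}$, and conditioning on $y_{i^*}=1$ leaves the address $i^*$ uniformly distributed, $\E[\,r \mid f(x)=1\,] = (2^d+1)/2$. Therefore $\OPT_\N(f) \ge \tfrac12 \cdot \tfrac{2^d+1}{2} = \Omega(2^d) = \Omega(n)$, using $2^d = n - d$ and $d = O(\log n)$.

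To finish, I would invoke the non-adaptive increasing-cost strategy and its analysis in \cite{DBLP:conf/stoc/KaplanKM05}, which gives $\OPT_\N(f) \le n \cdot \OPT_\A(f)$ for every instance; combined with the lower bound this yields $\OPT_\N(f) = \Theta(n)\cdot \OPT_\A(f)$. I expect the main obstacle to be exactly the non-adaptive lower bound: one must rule out the possibility that some clever interleaving of the shared and dedicated tests in the permutation lets $S$ determine $f$ after searching only sublinearly many dedicated variables in expectation --- which comes down to the observation that, on the event $f(x)=1$, the cost is at least the (uniformly distributed) rank of $y_{i^*}$. The adaptive upper bound and the $O(n)$ side are routine.
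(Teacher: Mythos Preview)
Your proposal is correct and follows essentially the same approach as the paper: the same address function, the same cost assignment ($1/d$ on the shared variables, unit cost on the dedicated ones), the same adaptive upper bound of $2$, and the same appeal to \cite{DBLP:conf/stoc/KaplanKM05} for the $O(n)$ direction. The only minor difference is in the non-adaptive lower bound: you condition on $f(x)=1$ to force $y_{i^*}$ to be tested, whereas the paper (via the proof of Theorem~\ref{thm:nlognlower}) observes that $y_{i^*}$ must be tested on \emph{every} input, since after all shared variables are known $f$ reduces exactly to $y_{i^*}$ --- this avoids the factor-of-two loss from conditioning, but both arguments yield $\Omega(2^d)=\Omega(n)$.
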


\section{Conclusion and Open Problems}

We have shown bounds on the adaptivity gaps 
for the SBFE problem for well-studied classes of Boolean formulas.
Our proof of the lower bound for read-once formulas depended on having $p_i$'s that are slightly larger than 1/2 but we conjecture that a similar or better lower bound holds for the uniform distribution.  
We note that our lower bound for read-once formulas also applies to (linear-size) monotone DNF formulas, since the given read-once formula based on the binary tree has a DNF formula with one term per leaf.  Another open question is to prove a lower bound for monotone DNF formulas that matches our lower bound for general DNF formulas.  

A long-standing open problem is whether the SBFE problem for read-once formulas has a polynomial-time algorithm (cf.~\cite{DBLP:journals/ai/GreinerHJM06,Unluyurt04}).
The original problem only considered adaptive strategies, and it is also open whether there is a
polynomial-time (or pseudo polynomial-time) constant or $\log n$ approximation algorithm for such strategies.
Happach et al.~\cite{happach2021general} gave a pseudo polynomial-time approximation algorithm 
for the non-adaptive version of the problem, which outputs a non-adaptive strategy with expected cost within a constant factor of the optimal non-adaptive strategy.
Because of the large adaptivity gap for read-once formulas,
as shown in this paper, the result of Happach et al.
does not have any implications for the open question
of approximating the adaptive version of the SBFE
problem for read-once formulas.



\bibliography{miscrefs.bib}

\appendix

\section{Additional Proofs}\label{app:delayed}

\begin{proof}[Proof of Theorem \ref{thm:rodnf-uu-upper}]
Suppose $f$ is a read-once DNF formula.
We will prove that for unit costs and the uniform distribution,
there is a non-adaptive strategy $S$ such that
$\cost(f,S) \leq O(\log n) \cdot \OPT_{\A}(f)$.

Let $m$ be the number of terms in $f$.
Because each variable $x_i$ appears in at most one term,
we have that $m \leq n$.
As a warm-up, we begin by proving adaptivity gaps 
for two special cases of $f$.

\paragraph*{Case 1: All terms have at most $2\log n$ variables}
Under the uniform distribution and with unit costs,
the $p_i$ are all equal, and the $c_i$ are all equal.  Thus
in this case,
the optimal adaptive strategy described previously tests
terms in increasing order of length.
The adaptive strategy \textit{skips} in the sense that
if it finds a variable in a term that is false, 
it moves to the next term without testing the remaining variables
in the term.
Suppose we eliminate skipping from the optimal adaptive strategy,
making the strategy non-adaptive.
Since all terms have at most $2\log n$ variables,
this increases the testing cost for any given $x$ 
by a factor of at most $2\log n$.
Thus the cost of evaluating $f(x)$ for a fixed $x$ 
increases by a factor of at most $2 \log n$ from
an optimal adaptive strategy to a non-adaptive strategy,
leading to an adaptivity gap of at most $2 \log n$.

\paragraph*{Case 2: All terms have more than $2\log n$ variables}
Consider the following non-adaptive strategy that operates in two phases.  
In Phase 1, the strategy tests a fixed subset of
$2 \log n$ variables from each term, where the terms
are taken in increasing length order.
In Phase 2, it tests the remaining untested
variables in fixed arbitrary order.
Since each term has more than $2\log n$ variables, the value $f$ can only be determined in Phase 1 if a false variable is found in each term during that phase.

Say that an assignment $x$ is \textit{bad} if the value of $f$ cannot be determined in Phase 1, meaning that a false variable is not found in every term during the phase.
The probability that a random $x$ satisfies all the tested 
$2 \log n$ variables of a particular term is $1/n^2$.  
Then, by the union bound, the probability that $x$ is bad 
is at most $m/n^2 \leq n/n^2 = 1/n$.

Now let us focus on the \textit{good} (not bad) assignments $x$.
For each good $x$, our strategy must find a false variable in each term of $f$, 
which requires at least one test per term for
any adaptive or non-adaptive strategy.
The cost incurred by our non-adaptive strategy 
on a good $x$ is at most $2m\log n$,
since the strategy certifies that 
$f(x)=0$ by the end of Phase 1.
Therefore, the expected cost incurred by our
non-adaptive strategy $S$ is
\begin{align*}
    \cost(f, S) &\le \Pr(x \mbox{ good})
    \cdot \E[\cost(f,x,S) | x \mbox{ good}] \\
    &+\Pr(x \mbox{ bad})
    \cdot \E[\cost(f,x,S) | x \mbox{ bad}] \\
    &\leq 1 \cdot 2m\log n + \frac{1}{n} \cdot n
    \leq 3m \log n
\end{align*}
using the fact that $\E[\cost(f,x,S) | x \mbox{ bad}] \leq n$, since there are only $n$ tests, with unit costs.

The expected cost of any strategy,
including the optimal adaptive strategy, is at least
\begin{align*}
   \OPT_\A(f) &\geq
   \min_{S \in \A} \Pr(f(x)=0) \cdot \E[\cost(f,x,S)|f(x)=0]
   \geq P(f(x)=0) \cdot m \\
   &= (1-\Pr(f(x)=1)) \cdot m 
   \ge (1-\Pr(x \mbox{ bad})) \cdot m 
   \ge \left(1- \frac{1}{n} \right) \cdot m 
   \ge \frac{m}{2}
\end{align*}
for $n \geq 2$.
It follows that the adaptivity gap is at most $6 \log n$.

\paragraph*{Case 3: Everything else}
We now generalize the ideas in the above two cases.
Let $f$ be a read-once DNF that
does not fall into Case 1 or Case 2.
We can break this DNF into two smaller DNFs,
$f = f_1 \vee f_2$ where $f_1$ contains the terms of $f$
of length at most $2\log n$ and
$f_2$ contains the terms of $f$ of length greater than $2 \log n$.

Let $S$ be the non-adaptive strategy that first applies
the strategy in Case 1 to $f_1$ and then, if $f_1(x)=0$,
the strategy in Case 2 to $f_2$.
Since $S$ cannot stop testing until it determines the value of $f$,
in the case that $f_1(x)=0$, it will test all variables in $f_1$ and then proceed
to test variables $f_2$.

Let $S^*$ be the optimal adaptive strategy for evaluating read-once DNFs, described above.
We know $S^*$ will test terms in non-decreasing
order of length since all tests are equivalent.
So, like $S$, $S^*$ tests $f_1$ first and then, if $f_1(x)=0$,
it continues to $f_2$.
It follows that we can write
the expected cost of $S$ on $f$ as
\begin{align*}
    \E[\cost(f,x,S)] = \E[\cost(f_1,x,S_1)]
    + \Pr(f_1(x)=0) \cdot \E[\cost(f_2, x, S_2)|f_1(x)=0]
\end{align*}
where $S_1$ is the first stage of $S$, where $f_1$ is evaluated, and $S_2$ is the second stage of $S$, where $f_2$ is evaluated.
Notice that, by the independence of variables, 
$\E[\cost(f_2, x, S_2)|f_1(x)=0] = \E[\cost(f_2, x, S_2)]$.
We can similarly write the expected cost of $S^*$ on $f$.
Then the adaptivity gap is
\begin{align}\label{eq:everythingelse}
    \frac{\OPT_\N(f)}{\OPT_\A(f)}
    \le \frac{\E[\cost(f_1,x,S_1)]
    + \Pr(f_1(x)=0) \cdot \E[\cost(f_2, x, S_2)]}
    {\E[\cost(f_1,x,S^*_1)]
    + \Pr(f_1(x)=0) \cdot \E[\cost(f_2, x, S^*_2)]}
\end{align}
where $S^*_1$ is $S^*$ applied to $f_1$
and $S^*_2$ is $S^*$ beginning from the point when it starts
evaluating $f_2$.

Using the observation that $(a+b)/(c+d) \leq \max\{a/c, b/d\}$
for positive real numbers $a,b,c,d$,
we know that
\begin{align*}
    (\ref{eq:everythingelse}) \le
    \max \left\{
    \frac{\E[\cost(f_1,x,S_1)]}{\E[\cost(f_1,x,S^*_1)]},
    \frac{\E[\cost(f_2,x,S_2)]}{\E[\cost(f_2,x,S^*_2)]}
    \right\}
    = O(\log n)
\end{align*}
where the upper bound follows from the analysis of Cases 1 and 2.
\end{proof}

\begin{proof}[Proof of Theorem \ref{thm:rodnf-uu-lower}]
Suppose $f$ is a read-once DNF formula.
For unit costs and the uniform distribution,
we will show that
$\OPT_\N(f) \geq \Omega(\log n) \cdot \OPT_\A(f)$.

For ease of notation, assume $\sqrt{n}$ is an integer.
Consider a read-once DNF $f$ with $\sqrt{n}$ terms
where each term has $\sqrt{n}$ variables.
By examining the number of tests in each term,
we can write the optimal adaptive cost as
\begin{align*}
    \OPT_\A(f) \leq \sqrt{n}
    \sum_{i=1}^{\sqrt{n}} \frac{i}{2^i}
    \le \sqrt{n} \sum_{i=1}^{\infty} \frac{i}{2^i}
    = 2 \sqrt{n}.
\end{align*}
The key observation is that, within a term,
the adaptive strategy queries variables
in any order since each variable is equivalent
to any other.
Then the probability that the strategy queries
exactly $i \leq \sqrt{n}$ variables is $1/2^i$.

Next, we will lower bound the expected cost
of the optimal non-adaptive strategy
\begin{align*}
    \OPT_\N(f) &= \min_{S \in \N}
    \E_{x \sim \{0,1\}^n}[\cost(f,x,S)] \\
    &\ge \min_{S \in \N}
    \Pr(f(x)=0)
    \E[\cost(f,x,S)|f(x)=0]
\end{align*}
where $x \sim \{0,1\}^n$ indicates $x$ is
drawn from the uniform distribution.
First, we know $\Pr(f(x)=0) \geq .5$.
To see this, consider a random input $x \sim \{0,1\}^n$.
The probability that a particular term is true
is $1/2^{\sqrt{n}}$ so the probability that all terms
are false (i.e., $f(x)=0$) is 
\begin{align*}
    \left(1-\frac{1}{2^{\sqrt{n}}} \right)^{\sqrt{n}}
    = \left(
    \left(1-\frac{1}{2^{\sqrt{n}}} \right)^{2^{\sqrt{n}}}
    \right) ^{\sqrt{n}/2^{\sqrt{n}}}
    \geq \left(\frac{1}{2e}\right)^{\sqrt{n}/2^{\sqrt{n}}}
    \geq .5
\end{align*}
where the first inequality follows from
the loose lower bound that $(1-1/x)^x \ge 1/(2e)$
when $x \geq 2$
and the second inequality follows when $n \geq 8$.
Second, we know
\begin{align*}
    &\E[\cost(f,x,S)|f(x)=0] \\
    &\ge 
    \Pr(\textrm{one term needs $\Omega(\log n)$ tests}|f(x)=0)
    \cdot \frac{\log_4 n}{2} \cdot \frac{\sqrt{n}}{2}
\end{align*}
where we used the symmetry of the terms
to conclude that if any term needs $\Omega(\log n)$ tests
to evaluate it then any non-adaptive strategy
will have to spend $\Omega(\log n)$ on half
the terms in expectation.

All that remains is to
lower bound the probability one
term requires $\Omega(\log n)$ tests given $f(x)=0$.
Observe that this probability is
\begin{align*}
    &1-(1-\Pr(\textrm{a particular term needs 
    $\Omega(\log n)$ tests}
    |f(x)=0))^{\sqrt{n}} \\
    &\ge 1-\left(1-\frac{1}{\sqrt{n}}\right)^{\sqrt{n}}
    \ge 1- \frac{1}{e} \ge .63
\end{align*}
where we will now show the first inequality.
We can write the probability that a particular
term needs $\log_4(n)/2$ tests given $f(x)=0$ as
\begin{align*}
    \Pr &\left(x_1=1|f(x)=0 \right)
    \cdots
    \Pr \left(x_{\log_4(n)/2}=1|f(x)=0, x_1=\cdots =x_{\log_4(n)/2-1}=1 \right) \\ 
    &= \frac{2^{\sqrt{n}-1}-1}{2^{\sqrt{n}}-1}
    \cdots
    \frac{2^{\sqrt{n}-1-\log_4(n)/2}-1}
    {2^{\sqrt{n}-\log_4(n)/2}-1} \ge 
    \left(\frac{2^{\sqrt{n}-1-\log_4(n)/2}-1}
    {2^{\sqrt{n}-\log_4(n)/2}-1}\right)^{\log_4(n)/2} \\
    &\ge \left(\frac{1}{4} \right)^{\log_4(n)/2}
    = \frac{1}{\sqrt{n}}.
\end{align*}
For the first equality, we use the 
observation that conditioning
on $f(x)=0$ eliminates the possibility every variable
is true
so the probability of observing a true variable
is slightly smaller.
For the first inequality,
notice that $(2^{i-1}-1)/(2^{i}-1)$ is
monotone increasing in $i$.
For the second,
observe that $i \ge \sqrt{n} - \log_4(n)/2$
for our purposes
and so $(2^{i-1}-1)/(2^{i}-1) \geq 1/4$
when $n \geq 16$.
\end{proof}

\begin{proof}[Proof of Theorem \ref{thm:rodnf-ucap-lower}]
Suppose $f$ is a read-once DNF.
For unit costs and arbitrary probabilities,
we prove
$\OPT_\N(f) \geq \Omega(\sqrt{n}) \cdot \OPT_\A(f)$.

Consider the read-once DNF with $m=2\sqrt{n}$ identical terms
where each term has $\ell=\sqrt{n}/2$ variables.
In each term, let one variable have
$1/\ell$ probability of being true
and the remaining variables have a $(\ell/m)^{1/(\ell-1)}$
probability of being true.
Within a term,
the optimal adaptive strategy will
test the variable with the lowest probability of being true first.
Using this observation, we can write
\begin{align*}
    \OPT_\A(f) \leq 
    \left[
    \Pr(x_1=0) \cdot 1 + \Pr(x_1=1) \cdot \ell
    \right] \cdot m \\
    \le \left[
    (1-1/\ell) \cdot 1 + (1/\ell) \cdot \ell
    \right] \cdot m
    \leq 4 \sqrt{n}
\end{align*}
where $x_1$ is the first variable tested in
each term.
The first inequality follows by charging
the optimal adaptive strategy for all $\ell$
tests in the term if the first one is true.
The second inequality follows since
the variable with probability $1/\ell$
of being true
is tested first for $n \ge 18$
(i.e., $1/\ell < (\ell/m)^{1/(\ell-1)}$
for such $n$).

In order to lower bound the cost of the 
optimal non-adaptive strategy,
we will argue that there is a constant probability
of an event where the non-adaptive strategy
has to test  $\Omega(n)$ variables.
In particular,
\begin{align*}
    \OPT_\N(f) \geq &\min_{S \in \N}
    \Pr(\mbox{exactly one term is true}) \\ \cdot
    &\E[\cost(f,x,S)|\mbox{ exactly one term is true}].
\end{align*}

By the symmetry of the terms,
observe that $$\E[\cost(f,x,S)|\mbox{ exactly one term is true}]
\ge \sqrt{n}/2 \cdot \sqrt{n} = n/2.$$
That is, the optimal non-adaptive strategy
has to search blindly for the single true term
among all $2\sqrt{n}$ terms,
making $\sqrt{n}/2$ tests each for half
the terms in expectation.

All that remains is to show there is a constant
probability exactly one term is true.
The probability a particular term is true
is $(1/\ell)((\ell/m)^{1/(\ell-1)})^{(\ell-1)} = 1/m$.
Since all variables are independent,
the probability that exactly one
of the $m$ terms is true is
\begin{align*}
    &m \cdot \Pr(\textrm{a term is true})
    \cdot \Pr(\textrm{a term is false})^{m-1} \\
    =& m \cdot \frac{1}{m} \cdot \left(1-\frac{1}{m} \right)^{m-1}
    \geq \frac{1}{2e}^{(m-1)/m} \geq \frac{1}{2e}.
\end{align*}

It follows that $\OPT_\N(f) \geq \frac{1}{2e} \cdot \frac{n}{2}
= \Omega(n)$ so
the adaptivity gap is $\Omega(\sqrt{n})$.
\end{proof}

\begin{proof}[Proof of Theorem \ref{thm:lambert}]
Suppose $f$ is a read-once formula.
For arbitrary costs and the uniform distribution,
$\OPT_\N(f) \geq \Omega(n^{1-\epsilon}/\log n)
\cdot \OPT_\A(f)$.

Define $W(w) := w^{1-\epsilon} \log_2(w^{1-\epsilon})$
for positive real numbers 
$w$.\footnote{Notice that $W$ is similar to a
Lambert W function $e^y y$, after changing the base of the logarithm and
substituting $y=\log(w^{1-\epsilon})$\cite{bronstein2008algebraic}.}
We will choose $n_\epsilon$ in terms of the function $W$
so that $W(n) < n$ for $n \ge n_\epsilon$.
First, consider the first and second derivatives of $W$:
\begin{align*}
    W'(w) &= \frac{1-\epsilon}{w^\epsilon}
    \left( \log_2(w^{1-\epsilon}) + \frac{1}{\log 2}
    \right) \\
    W''(w) &= \frac{1-\epsilon}{w^{1+\epsilon}}
    \left[ -\epsilon \left( \log_2(w^{1-\epsilon}) + \frac{1}{\log 2}\right)
    + \frac{1-\epsilon}{\log 2}
    \right].
\end{align*}
For fixed $\epsilon > 0$,
observe that as $w$ goes to infinity,
$W(w) < w$, $W'(w) < 1$, and $W''(w) < 0$.
Therefore there is some point $n_\epsilon$ so that
for all $n \ge n_\epsilon$, the slope of $W$ is decreasing,
the slope of $W$ is less than the slope of $n$,
and $W(n)$ is less than $n$.
Equivalently, $n \ge W(n) = n^{1-\epsilon}\log_2(n^{1-\epsilon})$.
We will use this inequality when lower bounding the asymptotic 
behavior of the adaptivity gap.

For $n \ge n_\epsilon$ we construct the $n$-variable read-once DNF formula $f$ as follows.
First, let $r_n$ be a real number such that $n = n^{1-r_n}\log_2(n^{1-r_n})$.
We know that $r_n$ exists for all $n \geq 4$ 
by continuity since
$n^{1-0} \log_2(n^{1-0}) \geq n \geq n^{1-1} \log_2(n^{1-1})$.
Let $f$ be the read-once DNF formula with $m$ terms of length $\ell$, where
$\ell = \log_2(n^{1-r_n})$ and $m=2^{\ell}$. 
Thus the total number of variables in $f$ is $m\ell = n^{1-r_n}\log_2(n^{1-r_n})=n$ as desired.
We assume for simplicity that $\ell$ is an integer.
The bound holds by a similar proof without this assumption.
%
%

To obtain our lower bound on evaluating this formula, we consider expected evaluation cost with respect to the
uniform distribution and the following cost assignment: 
in each term, choose an arbitrary ordering of the variables and set the cost of testing
the $i$th variable in the term to be $2^{i-1}$.

Consider a particular term.
Recall the optimal adaptive strategy for evaluating a read-once DNF formula presented at the start of Section~\ref{sec:read-once}. 
Within a term, this optimal strategy tests the variables in non-decreasing cost order,
since each variable has the same probability of being true.
Since it performs tests within a term until
finding a false variable or certifying
the term is true,
we can upper bound the expected cost 
of this optimal adaptive strategy in evaluating $f$ as follows:
\begin{align*}
    \OPT_\A(f) \leq m \cdot \left[
    \frac{1}{2}\cdot (1) + \frac{1}{4}\cdot (1+2)
    + \ldots +
    \frac{1}{2^\ell}\cdot(1+\ldots+2^{\ell-1}) \right]
    \leq m \cdot \ell.
\end{align*}
In contrast, the optimal non-adaptive strategy
does not have the advantage of stopping tests in a term
when it finds a false variable.
We will lower bound the expected cost
of the optimal non-adaptive strategy
in the case that exactly one term is true.
By symmetry, any non-adaptive strategy
will have to randomly search for the
term and so pay $2^{\ell}$ for
half the terms in expectation.

All that remains is to show there is a constant
probability exactly one term is true.
The probability that a particular term is true 
is $1/2^{\ell}$ and so the probability
that exactly one term is true is
\begin{align*}
    m \cdot \frac{1}{2^\ell} \cdot \left( 1-\frac{1}{2^\ell}\right)^{m-1} 
    \geq \frac{m}{2^\ell} \cdot
    \left(\frac{1}{2e} \right)^{(m-1)/2^\ell}
    \geq \frac{1}{2e}
\end{align*}
where the last inequality follows since $m=2^\ell$.
Then the expected cost $\OPT_\N(f)$ of the optimal
non-adaptive strategy is at least
\begin{align*}
    \Pr(\mbox{exactly one term is true})
    \cdot 2^\ell \cdot \frac{m}{2}
    = \Omega(m \cdot 2^\ell)
    = \Omega(m \cdot n^{1-r_n})
    \ge \Omega(m \cdot n^{1-\epsilon})
\end{align*}
where we used that $2^\ell = n^{1-r_n}$
and $n^{1-r_n} \log_2(n^{1-r_n})
= n \geq n^{1-\epsilon} \log_2(n^{1-\epsilon})$
since $n \ge n_\epsilon$.
It follows that the adaptivity gap is 
$\Omega(n^{1-\epsilon}/\log n)$.
\end{proof}

\end{document}